\documentclass[12pt,a4paper]{article}

\usepackage{amsfonts,amsmath,amssymb,amsthm}
\usepackage{latexsym,amscd}
\usepackage{amsbsy}
\usepackage{CJK}
\usepackage{indentfirst,mathrsfs}
\usepackage{amsfonts,amsmath,amssymb,amsthm}
\usepackage{latexsym,amscd}
\usepackage{amsbsy}
\usepackage{color}
\usepackage{multirow}
\usepackage{makecell}
\usepackage{float}
\usepackage{cases}
\usepackage[caption=false]{subfig}

\numberwithin{equation}{section}

\newtheorem{thm}{Theorem}
\newtheorem{lem}{Lemma}

\newtheorem{prop}{Proposition}
\newtheorem{definition}{Definition}

\setlength{\oddsidemargin}{0cm} \setlength{\evensidemargin}{0cm}
\setlength{\textwidth}{16cm} \setlength{\textheight}{21cm}

\newcommand{\Tr}{{\rm {Tr}}}

\begin{document}

\title{A class of APcN power functions over finite fields of even characteristic}
\author{Ziran Tu, Xiangyong Zeng, Yupeng Jiang and Xiaohu Tang
\thanks{The authors are with the Hubei Key Laboratory of Applied Mathematics, Faculty of Mathematics and Statistics, Hubei
University, Wuhan, 430062, China. Yupeng Jiang is with the School of Cyber Science and Technology, Beihang University, Beijing, 100191, China. Email: tuziran@aliyun.com, xzeng@hubu.edu.cn, jiangyupeng@amss.ac.cn, xhutang@swjtu.edu.cn}
}
\date{}
\maketitle

\begin{quote}
{\small {\bf Abstract:}
In this paper, we investigate the power functions $F(x)=x^d$ over the finite field $\mathbb{F}_{2^{4n}}$, where $n$ is a positive integer and $d=2^{3n}+2^{2n}+2^{n}-1$. It is proved that $F(x)=x^d$ is APcN at certain $c$'s in $\mathbb{F}_{2^{4n}}$, and it is the second class of APcN power functions over finite fields of even characteristic. Further,  the $c$-differential spectrum of these power functions is also determined.
}

{\small {\bf Keywords:} } Power function, $c$-differential uniformity, $c$-differential spectrum.
\end{quote}

\section{Introduction} \label{intro}

Let $n$, $m$ be two positive integers and $\mathbb{F}_{2^n}$ denote the finite field with $2^n$ elements. An S-box is a vectorial Boolean function from  $\mathbb{F}_{2^n}$ to $\mathbb{F}_{2^m}$, also called an $(n,m)$-function. The security of most modern block ciphers deeply relies on cryptographic properties of their S-boxes since S-boxes usually are the only nonlinear elements of these cryptosystems. It is therefore significant to employ S-boxes with good cryptographic properties in order to resist various kinds of cryptanalytic attacks.

Differential attack \cite{BS} is one of the most fundamental cryptanalytic approaches targeting symmetric-key primitives and is the first statistical attack for breaking iterated block ciphers.
The differential uniformity of S-boxes, which was introduced by Nyberg in \cite{N}, can be used to measure how well the S-box used in a cipher could resist the differential attack.
\begin{definition}\label{defn1}
Let $\mathbb{F}_{q}$ be the finite field of $q$ elements. A function $F$ defined over $\mathbb{F}_{q}$ is called differentially $\delta_{F}$-uniform, where $\delta_{F}=\mathop{max}\limits_{a\in \mathbb{F}_{q}^{*},b\in \mathbb{F}_{q}}\delta_{F}(a,b)$ and
$$\delta_{F}(a,b)=\#\{x\in \mathbb{F}_{q}:F(x+a)-F(x)=b\}.$$
\end{definition}
We call the function $F$ perfect nonlinear  (PN) or almost perfect nonlinear (APN), if $\delta_{F}=1$ or $\delta_{F}=2$, respectively.
It is well-known that PN functions only exists for an odd prime power $q$. Thus, when $q$ is even, APN functions have the best resistance to differential attacks.
To analyze the ciphers using modular multiplication as primitive operations more effectively, the authors in \cite{borisov} proposed the concept of multiplicative differential.
Very recently, based on this new type of differential, Ellingsen, Felke, Riera, St\v{a}nic\v{a} and Tkachenko gave the definition of $c$-differential uniformity in \cite{Ellingsen-it}:
\begin{definition}\label{defn2}
Let $q$ be a prime power and $\mathbb{F}_{q}$ be the finite field with $q$ elements. Given a function $F: \mathbb{F}_{q}\rightarrow \mathbb{F}_{q}$, the (multiplicative) $c$-derivative of $F$ with respect to $a$ is defined as
$$_{c}D_{a}F(x)=F(x+a)-cF(x).$$
Denote
$$_c\Delta_{F}(a,b)=\#\{x\in \mathbb{F}_{q}:{_cD_{a}F(x)=b}\}$$
and
$$_c\Delta_{F}=\mathop{max}\limits_{a,b\in \mathbb{F}_{q}} {_c\Delta_{F}(a,b)}.$$
Then $F$ is called differentially $(c, \,_{c}\Delta_{F})$-uniform.
\end{definition}
Note that if $c=0$ or $a=0$, then $_cD_{a}F(x)$ is just a shift of $F(x)$ or trivially $(1-c)F(x)$. If $c=1$, then $_cD_{a}F(x)$ becomes the usual derivative and the $c$-differential uniformity becomes differential uniformity in Definition \ref{defn1}. Similarly, we call a function perfect $c$-nonlinear (PcN) or almost perfect $c$-nonlinear (APcN), if $_c\Delta_{F}=1$ or $_c\Delta_{F}=2$, respectively. It is worth noting that PcN functions exist for even $q$, which is a big difference between PN and PcN properties. So far as we know, there are only very few results about PcN and APcN functions. The $c$-differential property of some power functions including Inverse functions, Gold functions, etc., have been investigated \cite{hasan,zhengdabin,yanhaode,zhazhengbang}. In \cite{zhengdabin} the authors give a necessary and sufficient condition for the Gold functions to be PcN, they further conjectured that all the PcN functions in $\mathbb{F}_{2^n}$ are linear functions, Gold functions and their inverses. Several ideas including the AGW criterion, cyclotomic method, the perturbing and swapping method \cite{bartoli,stanica,wuyanan} have been used to construct functions with low $c$-differential uniformity.

In this paper, we prove that this special power permutation $F(x)=x^{d}$ over $\mathbb{F}_{2^{4n}}$ is APcN on $c\in \mathbb{F}_{2^{4n}}\setminus\{0,1\}$ satisfying $c^{1+2^{2n}}=1$, where $d=2^{3n}+2^{2n}+2^n-1$. By introducing two parameters $\alpha$ and $\beta$ satisfying $\alpha^{q+1}=\beta^{q+1}=1$, we transform the APcN problem into solving a two-equation system on $\alpha$ and $\beta$. Then  a new variable $u$ is used to induce an equation with algebraic degree four, which help us give the final proof. To the best of our knowledge, there are only two classes of APcN power functions over the finite fields with even characteristic, the first one is the well-known Inverse functions, the second one is the power functions proposed in this paper.

\section{Preliminaries}\label{pres}
Let $q$ be a prime power, $\mathbb{F}_{q^m}$ and $\mathbb{F}_{q^n}$ are two finite fields with $m\mid n$. Then $\mathbb{F}_{q^m}$ can be seen as a subfield of $\mathbb{F}_{q^n}$ and the {\it relative trace} from $\mathbb{F}_{q^n}$ to $\mathbb{F}_{q^m}$ is defined as
$$\Tr_{m}^{n}(x)=x+x^{q^m}+x^{q^{2m}}+\cdots+x^{q^{\left(\frac{n}{m}-1\right)m}}.$$
If $m=1$, we call the above trace {\it Absolute}.
Given a finite field $\mathbb{F}_{q}$, let $s$ be a positive integer and $s\mid (q-1)$, define
\begin{eqnarray*}
  \mu_s=\{x\in\mathbb{F}_{q}^{*}: x^s =1\},
\end{eqnarray*}
which is constituted by all $s$th root of unity in $\mathbb{F}_{q}$. A very important such set is the {\it unit circle} of $\mathbb{F}_{q^2}$ when $q=2^m$, which is exactly defined as
\begin{equation*}
\mu_{q+1}=\left\{x\in \mathbb{F}_{q^2}^{*}: x^{q+1}=1\right\}.
\end{equation*}
The following lemma describes exactly the conditions that a quadratic equation has one or two solutions in the {\it unit circle}.
\begin{lem}\label{lem0}{\rm\cite{ztu}}
Let $n=2m$ be an even positive integer and $a,b\in \mathbb{F}_{2^n}^{*}$ satisfy ${\rm Tr}_{1}^{n}\left(\frac{b}{a^2}\right)=0$. Then the
quadratic equation $x^2+ax+b=0$ has

{\rm (1)} both solutions in the unit circle if and only if
$$\begin{array}{c}
b=a^{1-2^m} \text{ and }
{\rm Tr}_{1}^{m}\left(\frac{b}{a^2}\right)={\rm Tr}_{1}^{m}\left(\frac{1}{a^{1+2^m}}\right)=1;
\end{array}
$$

{\rm (2)} exactly one solution in the unit circle, if and only if
$$
\begin{array}{c}
b\ne a^{1-2^m} \text{ and }
(1+b^{1+2^m})(1+a^{1+2^m}+b^{1+2^m})+a^2b^{2^m}+a^{2^{m+1}}b=0.
\end{array}
$$
\end{lem}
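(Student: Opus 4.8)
The plan is to work throughout in $\mathbb{F}_{q^2}=\mathbb{F}_{2^n}$ with $q=2^m$, and to read the unit circle $\mu_{q+1}$ as the fixed-point set of the involution $\tau(z)=z^{-q}$ on $\mathbb{F}_{q^2}^{*}$: indeed $\tau(\tau(z))=z^{q^2}=z$, and $\tau(z)=z$ is equivalent to $z^{q+1}=1$. Since the characteristic is $2$ and $a\neq 0$, the equation $x^2+ax+b=0$ has two \emph{distinct} roots $x_1,x_2$ with $x_1+x_2=a$ and $x_1x_2=b$, and the hypothesis $\Tr_1^n(b/a^2)=0$ guarantees $x_1,x_2\in\mathbb{F}_{q^2}$ (otherwise they lie outside $\mathbb{F}_{q^2}$, so none can be in $\mu_{q+1}$). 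Writing $R=\{x_1,x_2\}$, the number of roots in the unit circle is exactly the number of $\tau$-fixed points in $R$. The auxiliary fact I will use repeatedly is that raising $x^2+ax+b=0$ to the $q$-th power shows $x_1^q,x_2^q$ are the roots of $x^2+a^qx+b^q$, so its reciprocal $b^qx^2+a^qx+1$ has root set $\{x_1^{-q},x_2^{-q}\}=\tau(R)$.

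For part (1), I will first show that $\tau(R)=R$ as sets is equivalent to $b^qx^2+a^qx+1$ being proportional to $x^2+ax+b$, which forces $b=a^{1-2^m}$ (and then $b^{q+1}=1$ holds automatically). When $\tau(R)=R$, the involution either fixes both roots (two in the circle) or swaps them (none). To separate the two outcomes I substitute $y=x/a$, so that $y_1,y_2$ are the roots of $y^2+y+a^{-(q+1)}=0$; a short computation shows the swap case forces $x_1/a\in\mathbb{F}_q$, hence $y_1,y_2\in\mathbb{F}_q$, which happens exactly when $\Tr_1^m(a^{-(q+1)})=0$. Therefore both roots lie in the unit circle precisely when $\Tr_1^m(a^{-(q+1)})=1$; since $b=a^{1-2^m}$ gives $b/a^2=a^{-(q+1)}=a^{-(1+2^m)}$, this is the stated trace condition, and the two traces in the statement are literally the same element.

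For part (2), assume $b\neq a^{1-2^m}$, so $\tau(R)\neq R$ and hence $|R\cap\tau(R)|\le 1$, giving at most one root in $\mu_{q+1}$. A circle root $x$ satisfies $x^2+ax+b=0$ and, raising this to the $q$-th power and using $x^q=x^{-1}$, also $b^qx^2+a^qx+1=0$; eliminating $x^2$ between the two gives the unique candidate $x^{*}=(1+b^{q+1})/(ab^q+a^q)$, whose denominator is nonzero exactly because $b\neq a^{1-2^m}$. I will then argue that ``exactly one root in the circle'' is equivalent to ``$x^{*}$ is a root of $x^2+ax+b$'': the linear relation defining $x^{*}$ shows that whenever $x^{*}$ solves the quadratic it also solves $b^qx^2+a^qx+1=0$, and a bookkeeping argument with $\tau$ (the common element of $R$ and $\tau(R)$, being a singleton, must be $\tau$-fixed) shows such an $x^{*}$ automatically lies in $\mu_{q+1}$. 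Substituting $x^{*}$ into the quadratic and clearing denominators leaves $N^2+aND+bD^2=0$ with $N=1+b^{q+1}$ and $D=ab^q+a^q$; expanding, the two $a^2b^{2q+1}$ terms cancel and what remains is exactly $(1+b^{q+1})(1+a^{q+1}+b^{q+1})+a^2b^{q}+a^{2q}b=0$.

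The main obstacle I anticipate is not the algebra but the combinatorial bookkeeping with the involution $\tau$: in part (1) one must rule out any ``mixed'' possibility and pin down the swap-versus-fix dichotomy correctly against the trace, and in part (2) one must verify that the algebraically forced candidate $x^{*}$, when it solves the quadratic, really is a fixed point of $\tau$ rather than merely a root of the conjugate-reciprocal polynomial. Once those set-theoretic points are secured, matching the cleared-denominator identity $N^2+aND+bD^2$ to the stated expression is a routine (if slightly delicate) characteristic-$2$ expansion.
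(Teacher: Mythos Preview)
The paper does not prove this lemma; it is quoted from \cite{ztu} and used as a black box, so there is no in-paper argument to compare against. That said, your proposal is a correct and self-contained proof. The involution $\tau(z)=z^{-q}$ neatly encodes membership in $\mu_{q+1}$, and your bookkeeping is sound: in part~(1), the dichotomy ``$\tau$ fixes both or swaps both'' is exactly right once $\tau(R)=R$, and your reduction of the swap case to $y_1=x_1/a\in\mathbb{F}_q$ (via $(x_1/a)^{q-1}=1$, using $x_1^q x_2=1$ and $x_1x_2=b=a^{1-q}$) cleanly gives the $\Tr_1^m$ criterion. In part~(2), the key point you flagged as delicate---that a common root $x^{*}$ of $x^2+ax+b$ and $b^qx^2+a^qx+1$ is automatically $\tau$-fixed---follows because $R\cap\tau(R)=\{x^{*}\}$ is a singleton and $\tau$ maps this intersection to itself; your sketch of this is correct. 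The final expansion $N^2+aND+bD^2$ matches the stated polynomial identity after the cancellation you noted.

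One small remark: you should also handle the degenerate subcase $1+b^{q+1}=0$ in part~(2), where $x^{*}=0$ is never a root of $x^2+ax+b$ (since $b\neq 0$); check separately that then no root lies on the circle (if $x_1\in\mu_{q+1}$ then $x_2=b/x_1$ has $x_2^{q+1}=b^{q+1}/x_1^{q+1}=1$, forcing both in the circle and hence $\tau(R)=R$, contradiction), and that the displayed identity indeed fails. This is consistent with your framework and only a sentence to add.
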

An important fact about the unit circle is the polar-decomposition of elements, i.e.,
each element in $\mathbb{F}_{q^2}^{*}$ can be uniquely written as
$$x=\lambda y,$$
where $\lambda\in \mu_{q+1}$ and $y\in \mathbb{F}_{q}^{*}$.
Let $F(x)=x^d$ be a power function on $\mathbb{F}_{q}$. Note that  ${_c\Delta_{F}(a, b)}={_c\Delta_{F}(1, b/a^d)}$, where ${_c\Delta_{F}(a, b)}$ is defined as Definition \ref{defn2}. Hence the differential characteristics of $F(x)=x^d$ are completely determined by the values of ${_c\Delta_{F}(1, b)}$ for $b\in\mathbb{F}_{q}$.  Let $\omega_{c,i}$ be defined as follows:
\begin{eqnarray*}
\omega_{c,i}=\#\{b\in\mathbb{F}_{q}: {_c\Delta_{F}(1, b)}=i\}.
\end{eqnarray*}
The $c$-differential spectrum of $F(x)=x^d$ at the point $c$ is the set $\mathbb{S}$ of $\omega_{c,i}$ with $0\leq i\leq {_c\Delta_F}$:
\begin{eqnarray*}
\mathbb{S}=\{\omega_{c,0},\omega_{c,1},\cdots, \omega_{c,{_c\Delta_F}}\}.
\end{eqnarray*}

\section{The APcN property of the power functions}
From now on, we always assume that $n$ is a positive integer, $q=2^n$ and $d=q^3+q^2+q-1$.
We further assume that $c\in \mu_{q^2+1}\setminus\{1\}$, and we will prove that at all such $c$'s the power function $x^d$ is APcN. Note that for such $c$, we have $\Tr_{n}^{4n}(c)=c+c^q+c^{-1}+c^{-q}\ne 0$, due to that $c$ and $c^{-1}$ are two solutions in $\mathbb{F}_{q^4}\setminus\mathbb{F}_{q^2}$ of the quadratic equation $x^2+(c+c^{-1})x+1=0$. The polar decomposition $b=\lambda z$ for any $b\in \mathbb{F}_{q^4}^{*}$, with $\lambda\in \mu_{q^2+1}$ and $z\in \mathbb{F}_{q^2}$, will be frequently used in later discussions, and we always denote $v=b^q+b^{q^2}$ for convenience.
To be preparations, several useful propositions are given as follows:
\begin{prop}\label{prop1}

{\rm (1)} For $v\in \mathbb{F}_{q^4}$, if ${\Tr}_{n}^{4n}(cv^{1+q})\cdot{\rm Tr}_{n}^{4n}(c^{1+q}v^{2q})\ne 0$, then
$$\Tr_{1}^{n}\left(\frac{\Tr_{n}^{4n}(cv^{1+q})}{\Tr_{n}^{4n}(c^{1+q}v^{2q})}\right)=1.$$

{\rm (2)} For $b\in \mathbb{F}_{q^4}$ satisfying $b^{1+q^2}=1$, $\Tr_{n}^{4n}(cv^{1+q})=0$ indicates
$b\in \{1,c\}$.
\end{prop}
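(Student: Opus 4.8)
The plan is to exploit that both relative traces land in $\mathbb{F}_q$, so $A:=\Tr_{n}^{4n}(cv^{1+q})$ and $B:=\Tr_{n}^{4n}(c^{1+q}v^{2q})$ are fixed by $x\mapsto x^{q}$, together with the hypothesis $c\in\mu_{q^2+1}$, which gives $c^{q^2}=c^{-1}$ and $c^{q^3}=c^{-q}$. First I would expand both four-term trace sums, substitute these relations for the conjugates of $c$, and look for a product structure. Setting $P=cv+c^{q}v^{q^2}$, a direct expansion shows $A=P\cdot\big(v^{q}+c^{-(1+q)}v^{q^3}\big)$.

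The decisive step is the identity $v^{q}+c^{-(1+q)}v^{q^3}=c^{-q}P^{q}$, which I would check by applying Frobenius to $P$ and using $c^{q^2}=c^{-1}$; it collapses the second factor into a power of $P$, yielding $A=c^{-q}P^{1+q}$. The same substitutions turn $B$ into $c^{-(1+q)}\big(P+cP^{q}\big)^2$, a square up to the unit $c^{-(1+q)}$. Dividing gives
\[
\frac{A}{B}=\frac{\sigma}{(1+\sigma)^2},\qquad \sigma:=cP^{q-1}.
\]
Two auxiliary facts then need verifying: that $P^{q^2}=c^{-(1+q)}P$ (again from $c^{q^2}=c^{-1}$), which forces $\sigma^{q+1}=1$, i.e. $\sigma\in\mu_{q+1}$; and that $A\ne0\iff P\ne0$ while $B\ne0\iff \sigma\ne1$. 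Hence under the hypotheses $\sigma$ ranges over $\mu_{q+1}\setminus\{1\}$.

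It remains to evaluate $\Tr_{1}^{n}\!\big(\tfrac{\sigma}{(1+\sigma)^2}\big)$ for $\sigma\in\mu_{q+1}\setminus\{1\}$. Since $\sigma^{-1}=\sigma^{q}$, I would rewrite this as $\frac{1}{\sigma+\sigma^{-1}}=\frac{1}{\Tr_{n}^{2n}(\sigma)}$, and note that $\sigma,\sigma^{-1}$ are the two roots of $z^2+\Tr_{n}^{2n}(\sigma)\,z+1=0$, both lying in the unit circle $\mu_{q+1}$. I then invoke Lemma \ref{lem0}(1) with $a=\Tr_{n}^{2n}(\sigma)$, $b=1$, $m=n$ (the preliminary hypothesis $\Tr_{1}^{2n}(b/a^2)=0$ is automatic because $b/a^2\in\mathbb{F}_q$): since both roots lie in the unit circle, the equivalent conditions must hold, in particular $\Tr_{1}^{n}\big(1/\Tr_{n}^{2n}(\sigma)^2\big)=1$; as absolute trace is invariant under squaring, this is exactly $\Tr_{1}^{n}(A/B)=1$, proving (1). (Equivalently, writing $\sigma=\xi^{q-1}$ by Hilbert 90 yields $\frac{\sigma}{(1+\sigma)^2}=y^2+y$ with $y=\xi/(\xi+\xi^{q})\in\mathbb{F}_{q^2}\setminus\mathbb{F}_q$, so $z^2+z=A/B$ has no root in $\mathbb{F}_q$ and the trace is $1$.)

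For (2) the variable is specialized: $b^{1+q^2}=1$ gives $b^{q^2}=b^{-1}$, whence $v=b^{q}+b^{-1}$ and $v^{q^2}=b+b^{-q}$. Since $A=c^{-q}P^{1+q}$, the condition $A=0$ reduces to $P=0$, and here $P$ factors cleanly as $P=(b^{q+1}+1)\big(cb^{-1}+c^{q}b^{-q}\big)$. Thus either $b^{q+1}=1$, which with $b^{q^2+1}=1$ forces $b\in\mathbb{F}_{q^2}$ and then $b^2=1$, i.e. $b=1$; or $cb^{-1}=c^{q}b^{-q}$, i.e. $(c/b)^{q-1}=1$, so $c/b\in\mathbb{F}_q^{*}$, and combining with $b\in\mu_{q^2+1}$ forces $b=c$. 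Either way $b\in\{1,c\}$. I expect the genuine obstacle to be discovering the factorizations, above all the relation $v^{q}+c^{-(1+q)}v^{q^3}=c^{-q}P^{q}$: without it the two traces remain unstructured four-term expressions and the absolute-trace computation looks intractable, whereas with it the whole problem collapses onto the single clean quantity $\sigma\in\mu_{q+1}$.
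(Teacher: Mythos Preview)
Your proof is correct and follows essentially the same strategy as the paper: both arguments express $A$ and $B$ through an auxiliary element of $\mathbb{F}_{q^2}$ and reduce part (1) to the standard identity $\Tr_{1}^{n}\!\bigl(\tfrac{\nu^{1+q}}{(\nu+\nu^{q})^{2}}\bigr)=1$ for $\nu\in\mathbb{F}_{q^2}\setminus\mathbb{F}_q$, while part (2) is handled by the same factorization of the vanishing condition into the cases $b^{q+1}=1$ and $(c/b)^{q-1}=1$. The only difference is cosmetic: the paper takes a square root $c=d^{2}$ and works with $\nu=d^{1-q}v+d^{q-1}v^{q^2}$, whereas you work directly with $P=cv+c^{q}v^{q^2}=d^{\,q+1}\nu$, which sidesteps the square root and leads to the pleasant formulation via $\sigma=cP^{q-1}\in\mu_{q+1}$; your explicit justification of the trace identity (via Lemma~\ref{lem0} or the Hilbert~90 rewriting) is also more self-contained than the paper's bare assertion.
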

\begin{proof}
(1) Let $c=d^2$ for some $d\in \mathbb{F}_{q^4}$, then $c^{q^2+1}=1$ if and only if $d^{1+q^2}=1$. The denominator
\begin{eqnarray*}
   &&\Tr_{n}^{4n}(c^{1+q}v^{2q})=\left(\Tr_{n}^{4n}(d^{1+q}v^{q})\right)^2  \\
   &=& (d^{q+1}v^q+d^{q^2+q}v^{q^2}+d^{q^3+q^2}v^{q^3}+d^{1+q^3}v)^2 \\
   &=& (\nu+\nu^q)^2,
\end{eqnarray*}
where $\nu\triangleq d^{1+q^3}v+d^{q^2+q}v^{q^2}\in \mathbb{F}_{q^2}\setminus \mathbb{F}_{q}$ due to $\Tr_{n}^{4n}(c^{1+q}v^{2q})\ne 0$. The numerator $\Tr_{n}^{4n}(cv^{q+1})$ equals
\begin{eqnarray*}
   &&  \Tr_{n}^{4n}(d^2v^{q+1})\\
   &=& d^2v^{q+1}+d^{2q}v^{q^2+q}+d^{2q^2}v^{q^3+q^2}+d^{2q^3}v^{1+q^3} \\
   &=& (d^{1+q^3}v+d^{q^2+q}v^{q^2})(d^{q+1}v^q+d^{q^3+q^2}v^{q^3})\\
   &=& \nu^{1+q}.
\end{eqnarray*}
Then
$$
   \Tr_{1}^{n}\left(\frac{\nu^{1+q}}{(\nu+\nu^q)^2}\right)=1.$$

   (2) From (1), $\Tr_{n}^{4n}(cv^{1+q})=0$ if and only if $$\nu=d^{1+q^3}(b^q+b^{q^2})+d^{q^2+q}(b^{q^3}+b)=0,$$
  in which the middle part equals
   \begin{eqnarray*}
      &&d^{1-q}(b^q+b^{-1})+d^{-1+q}(b^{-q}+b)  \\
      &=& d^{1-q}\left(b^q+b^{-1}+d^{2(q-1)}(b^{-q}+b)\right) \\
      &=& d^{1-q}\left(b^q+b^{-1}+c^{q-1}(b^{-q}+b)\right)\\
      &=& d^{1-q}b^{-q-1}(1+b^{q+1})(b^q+c^{q-1}b),
   \end{eqnarray*}
   then $b\in\{1,c\}$ due to the facts $b^{1+q^2}=c^{1+q^2}=1$ and ${\rm gcd}(q^2+1,q^2-1)=1$.
\end{proof}
\begin{prop}\label{prop2}
For all $b\in \mathbb{F}_{q^4}$, define
\begin{equation}\label{eq4-1}
\begin{array}{cll}
C_{1}(k)&=&(1+k^4)(1+b^{q+q^3})+(k^3+k)(c^q+c^{-q}+(c+c^{-1})b^{q+q^3}) \\
C_{0}(k)&=&\Tr_{n}^{4n}(b)+k\left(\Tr_{n}^{4n}(bc)+\Tr_{n}^{4n}(b)\Tr_{n}^{4n}(c)\right)\\
     &&+k^2\left(\Tr_{n}^{4n}(b)+\Tr_{n}^{4n}(c^{1+q}(b^{q^2}+b^{q^3}))\right)+k^3\Tr_{n}^{4n}(cb^{q^2}).
\end{array}
\end{equation}
If $C_{1}(k)=C_{0}(k)=0$ for some $k\in \mu_{q+1}$, then $k=1$.
\end{prop}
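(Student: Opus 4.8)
The plan is to exploit the self-reciprocal shape of $C_1$ in $k$ together with the fact that the coefficients of $C_0$ lie in $\mathbb{F}_q$. Write $C_1(k)=A(1+k^4)+D(k^3+k)$ with $A=1+b^{q+q^3}$ and $D=c^q+c^{-q}+(c+c^{-1})b^{q+q^3}$; this polynomial is palindromic with vanishing middle coefficient. For $k\in\mu_{q+1}$ we have $k^{-1}=k^q$, so $w:=k+k^{-1}=\Tr_{n}^{2n}(k)\in\mathbb{F}_q$, and in characteristic two $k^{-2}C_1(k)=A(k^2+k^{-2})+D(k+k^{-1})=Aw^2+Dw=w(Aw+D)$. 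Thus $C_1(k)=0$ forces either $w=0$, which gives $k^2=1$ and hence $k=1$ (the desired conclusion), or $Aw=D$. I therefore assume $k\neq1$, so that $w\in\mathbb{F}_q^{*}$ and
\[
(1+b^{q+q^3})\,w=c^q+c^{-q}+(c+c^{-1})\,b^{q+q^3}.
\]
Setting $s:=c+c^{-1}$, the standing hypothesis $\Tr_{n}^{4n}(c)=s+s^q\neq0$ gives $s\notin\mathbb{F}_q$, hence $w+s\neq0$, so the displayed relation solves to $b^{q+q^3}=(w+s^q)/(w+s)$; raising to the $q$-th power yields $b^{1+q^2}=(w+s)/(w+s^q)$ and therefore $b^{1+q+q^2+q^3}=1$ automatically.

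Next I use $C_0$. Since $k\neq1$ and $\gcd(q+1,q-1)=1$, the element $k$ lies in $\mathbb{F}_{q^2}\setminus\mathbb{F}_q$ with minimal polynomial $X^2+wX+1$ over $\mathbb{F}_q$; thus $\{1,k\}$ is an $\mathbb{F}_q$-basis and $k^2=wk+1$. Every coefficient $t_0,\dots,t_3$ of $C_0$ is a value of $\Tr_{n}^{4n}$, hence lies in $\mathbb{F}_q$, so reducing $C_0(k)$ modulo $X^2+wX+1$ gives $C_0(k)=R_0(w)+R_1(w)k$ with $R_0,R_1\in\mathbb{F}_q[w]$, and $C_0(k)=0$ forces $R_0(w)=R_1(w)=0$. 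Equivalently, the identity $C_0(k^{-1})=C_0(k)^{q}=0$ shows $k^{-1}$ is a second common root, and symmetrizing $C_0$ with its reciprocal polynomial and cancelling the factor $1+k$ yields the compact relation $(t_0+t_3)\,w=(t_0+t_3)+(t_1+t_2)$.

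To combine these constraints I coordinatize $b$ through the $\mathbb{F}_{q^2}$-basis $\{1,c\}$ of $\mathbb{F}_{q^4}$, writing $b=\xi+\eta c$ with $\xi,\eta\in\mathbb{F}_{q^2}$ (the polar decomposition $b=\lambda z$ of the Preliminaries would serve equally well). Using $c^{q^2}=c^{-1}$, the trace coefficients collapse to tractable forms such as $\Tr_{n}^{4n}(b)=\Tr_{n}^{2n}(\eta s)$, $\Tr_{n}^{4n}(cb^{q^2})=\Tr_{n}^{2n}(\xi s)$, and $b^{q+q^3}=(\xi^2+\eta^2+\xi\eta s)^q$. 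Substituting $b^{q+q^3}=(w+s^q)/(w+s)$ and these expressions into the two $\mathbb{F}_q$-relations coming from $C_0$ produces a system in $\xi,\eta,w$ over $\mathbb{F}_q$; eliminating $\xi$ and $\eta$ should reduce it to a single relation in $w$ incompatible with $w\neq0$, the incompatibility being driven by $\Tr_{n}^{4n}(c)=s+s^q\neq0$.

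The main obstacle is exactly this final elimination. The awkward coefficient is $t_2=\Tr_{n}^{4n}(b)+\Tr_{n}^{4n}\!\bigl(c^{1+q}(b^{q^2}+b^{q^3})\bigr)$, whose expansion in the $\{1,c\}$-coordinates is the most laborious part, and the concluding step must play this off against the $C_1$-relation, the quadratic $k^2=wk+1$, and the constraint $c\in\mu_{q^2+1}$ to force $w=0$. I expect the degree-four relation underlying the palindromic reduction of $C_1$ to be the ``equation with algebraic degree four'' referred to in the introduction, and the heart of the argument to be that its only root consistent with $C_0$ satisfies $w=0$, i.e. $k=1$.
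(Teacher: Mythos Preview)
Your opening is sound and matches the paper: the palindromic factorization $k^{-2}C_1(k)=w(Aw+D)$ with $w=k+k^{-1}\in\mathbb{F}_q$, the case split $w=0$ versus $Aw=D$, and the consequence $b^{1+q+q^2+q^3}=1$ are exactly how the paper begins. Likewise, reducing $C_0(k)$ modulo $X^2+wX+1$ to obtain two $\mathbb{F}_q$-constraints is what the paper does (their equations (3.3) and (3.4)).

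The gap is that you stop at a plan. Your single symmetric relation $(t_0+t_3)w=(t_0+t_3)+(t_1+t_2)$ is only one $\mathbb{F}_q$-linear combination of $R_0=0$ and $R_1=0$ and is not enough by itself; the paper needs both constraints separately. Concretely, the paper writes $b=\lambda z$ with $\lambda\in\mu_{q^2+1}$, $z\in\mathbb{F}_{q^2}^{*}$, and from $Aw=D$ together with $b^{1+q+q^2+q^3}=1$ extracts the closed form
\[
w=\frac{(c+c^{-1})z^q+(c^q+c^{-q})z}{z^q+z}.
\]
Plugging this into the $k$-coefficient relation ($R_1=0$, their (3.4)) collapses everything to $z^{q+1}\,\Tr_n^{4n}(c)\,\Tr_n^{4n}(\lambda)=0$, forcing $\lambda=1$, i.e.\ $b\in\mathbb{F}_{q^2}^{*}$. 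Only then does the constant-term relation ($R_0=0$, their (3.3)) yield $z^{q+1}\Tr_n^{4n}(c^2)=0$, the final contradiction. This two-step mechanism---first pin down $\lambda$, then contradict via the remaining equation---is the missing idea in your proposal; a single merged relation cannot see it.

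Your coordinatization $b=\xi+\eta c$ with $\xi,\eta\in\mathbb{F}_{q^2}$ is a legitimate alternative to the polar decomposition, but you would still need to exploit \emph{both} $R_0=0$ and $R_1=0$, and the polar decomposition is what makes the cancellations transparent (the $z$-dependence of $w$ separates from the $\lambda$-dependence of $\Tr_n^{4n}(b)$). Finally, the ``equation of algebraic degree four'' alluded to in the introduction is not $C_1$ but the polynomial $G(u)$ of Proposition~5, which appears later in the proof of the main theorem; it plays no role in this proposition.
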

\begin{proof}
The proof is proceeded as follows:

  If $1+b^{1+q^2}=0$, then $C_{1}(k)=k(1+k^2)(c^q+c^{-q}+c+c^{-1})=0$ gives that $k=1$, due to $c^q+c^{-q}+c+c^{-1}=\Tr_{n}^{4n}(c)\ne0$ for $c\in \mu_{q^2+1}\setminus\{1\}$. The assumptions $1+b^{1+q^2}\ne 0$ and $c^q+c^{-q}+(c+c^{-1})b^{q+q^3}=0$ also indicate that $k=1$. Now assume that $1+b^{1+q^2}\ne 0$ and $c^q+c^{-q}+(c+c^{-1})b^{q+q^3}\ne0$. If there exists some $k\in \mu_{q+1}\setminus\{1\}$ such that $C_{1}(k)=0$, then $(k^2+1)(1+b^{q+q^3})+k(c^q+c^{-q}+(c+c^{-1})b^{q+q^3})=0$, which means that
  $$u\triangleq k+k^{-1}=\frac{c^q+c^{-q}+(c+c^{-1})b^{q+q^3}}{1+b^{q+q^3}}$$
  satisfies $u\in \mathbb{F}_{q}^{*}$. Note that $u=u^q$ if and only if
  $$\frac{c^q+c^{-q}+(c+c^{-1})b^{q+q^3}}{1+b^{q+q^3}}=\frac{c+c^{-1}+(c^q+c^{-q})b^{1+q^2}}{1+b^{1+q^2}},$$
  by expanding the above equality we have
  \begin{eqnarray*}
     && (1+b^{1+q^2})(c^q+c^{-q}+(c+c^{-1})b^{q+q^3})+(1+b^{q+q^3})(c+c^{-1}+(c^q+c^{-q})b^{1+q^2})  \\
     &=& (c+c^{-1}+c^{q}+c^{-q})(1+b^{1+q+q^2+q^3})=0,
  \end{eqnarray*}
  which implies $b^{1+q+q^2+q^3}=1$. By substituting $b$ with $\lambda z$, we have
  \begin{eqnarray*}
     u^{1+q}&=&\frac{(c^q+c^{-q}+(c+c^{-1})b^{q+q^3})(c+c^{-1}+(c^q+c^{-q})b^{1+q^2})}{(1+b^{q+q^3})(1+b^{1+q^2})}  \\
     &=&\frac{(c+c^{-1})^2b^{q+q^3}+(c^q+c^{-q})^2b^{1+q^2}}{b^{q+q^3}+b^{1+q^2}} \\
     &=&\frac{(c+c^{-1})^2z^{2q}+(c^q+c^{-q})^2z^2}{z^{2q}+z^2},
  \end{eqnarray*}
  which gives
  \begin{equation}\label{eq4-2}
  u=\frac{(c+c^{-1})z^q+(c^q+c^{-q})z}{z^q+z}.
  \end{equation}
  We substitute $k^2=uk+1$ into $C_{0}(k)$ and obtain that
  \begin{eqnarray*}
    C_{0}(k)&=&\Tr_{n}^{4n}(b)+k\left(\Tr_{n}^{4n}(bc)+\Tr_{n}^{4n}(b)\Tr_{n}^{4n}(c)\right)\\
     &&+(uk+1)\left(\Tr_{n}^{4n}(b)+\Tr_{n}^{4n}(c^{1+q}(b^{q^2}+b^{q^3}))\right)+\left((u^2+1)k+u\right)\Tr_{n}^{4n}(cb^{q^2})  \\
     &=&\Tr_{n}^{4n}\left(c^{1+q}(b^{q^2}+b^{q^3})\right)+u\Tr_{n}^{4n}(cb^{q^2})\\
     &&+k\left(\Tr_{n}^{4n}(bc)+\Tr_{n}^{4n}(b)\Tr_{n}^{4n}(c)+u\left(\Tr_{n}^{4n}(b)+\Tr_{n}^{4n}(c^{1+q}(b^{q^2}+b^{q^3}))\right)\right.\\
       &&~~~~~+\left.(1+u^2)\Tr_{n}^{4n}(cb^{q^2})\right).
  \end{eqnarray*}
Since all the coefficients belong to $\mathbb{F}_{q}$ and $k\notin \mathbb{F}_{q}$, we have that $C_{0}(k)=0$ if and only if
\begin{equation}\label{eq4-3}
\Tr_{n}^{4n}\left(c^{1+q}(b^{q^2}+b^{q^3})\right)+u\Tr_{n}^{4n}(cb^{q^2})=0
\end{equation}
and
\begin{eqnarray}
&&\Tr_{n}^{4n}(bc)+\Tr_{n}^{4n}(b)\Tr_{n}^{4n}(c)+u\Tr_{n}^{4n}(b)+\Tr_{n}^{4n}(cb^{q^2})\nonumber\\
&=&\Tr_{n}^{4n}\left(c(b^q+b^{q^3})\right)+u\Tr_{n}^{4n}(b)=0.\label{eq4-4}
\end{eqnarray}
By plugging \eqref{eq4-2} into \eqref{eq4-4}, we get
\begin{eqnarray*}
   && \Tr_{n}^{4n}\left(c(b^q+b^{q^3})\right)(z^q+z)+((c+c^{-1})z^q+(c^q+c^{-q})z)\Tr_{n}^{4n}(b)  \\
   &=&\Tr_{n}^{4n}\left(c(\lambda^q+\lambda^{-q})z^q\right)(z^q+z)+((c+c^{-1})z^q+(c^q+c^{-q})z)\Tr_{n}^{4n}(\lambda z)  \\
   &=&\left((\lambda^q+\lambda^{-q})z^q(c+c^{-1})+(\lambda+\lambda^{-1})z(c^q+c^{-q})\right)(z+z^q)\\
   &&+\left((c+c^{-1})z^q+(c^q+c^{-q})z\right)\left((\lambda+\lambda^{-1})z+(\lambda^q+\lambda^{-q})z^q\right)\\
   &=&z^{q+1}\left((\lambda^q+\lambda^{-q})(c+c^{-1})+(\lambda+\lambda^{-1})(c^q+c^{-q})\right.\\
   &&+\left.(\lambda+\lambda^{-1})(c+c^{-1})+(\lambda^q+\lambda^{-q})(c^q+c^{-q})\right)\\
   &=&z^{q+1}\Tr_{n}^{4n}(c)\Tr_{n}^{4n}(\lambda)=0,
\end{eqnarray*}
which means that $\Tr_{n}^{4n}(\lambda)=0$ and then $\lambda=1$. That is to say, we have $b=z\in \mathbb{F}_{q^2}^{*}$. Then \eqref{eq4-3} gives \begin{eqnarray*}
   && \Tr_{n}^{4n}\left(c^{1+q}(z+z^{q})\right)(z+z^q)+\left((c+c^{-1})z^q+(c^q+c^{-q})z\right)\Tr_{n}^{4n}(cz) \\
   &=&\Tr_{n}^{4n}(c^{1+q})(z+z^q)^2+\left((c+c^{-1})z^q+(c^q+c^{-q})z\right)\left((c+c^{-1})z+(c^q+c^{-q})z^q\right)  \\
   &=&z^{q+1}\Tr_{n}^{4n}(c^2)\neq 0,
\end{eqnarray*}
which contradicts.

  This completes the proof.
\end{proof}

\begin{prop}\label{prop3}
Assume that $\Tr_{n}^{4n}\left(c(v+v^{q})\right)+\Tr_{n}^{4n}\left(c^{q+1}v^{q}\right)=0$. Define
\begin{eqnarray*}
  A_{1} &=& \Tr_{n}^{4n}\left(c(v+v^{q})\right), \\
  A_{2} &=& c^{q}+c^{-q}+(c+c^{-1})b^{q+q^3}.
\end{eqnarray*}
Then for any $b\in \mathbb{F}_{q^4}$, $A_{1}$ and $A_{2}$ are not zero at the same time.
\end{prop}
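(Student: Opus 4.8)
The plan is to pass to the polar decomposition $b=\lambda z$ (with $\lambda\in\mu_{q^2+1}$, $z\in\mathbb{F}_{q^2}^{*}$) and to reduce the whole statement to two trace conditions on the unit circle $\mu_{q^2+1}$. Throughout I write $C=c+c^{-1}$ and $L=\lambda+\lambda^{-1}$; both lie in $\mathbb{F}_{q^2}$ and are nonzero because $c\neq1$. Using $\lambda^{q^2}=\lambda^{-1}$, $c^{q^2}=c^{-1}$ and their $q$-powers, one checks the simplifications $b^{q+q^3}=z^{2q}$ and $v+v^{q}=b^{q}+b^{q^3}=L^{q}z^{q}$. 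Since $\Tr_{n}^{4n}(cw)=Cw+C^{q}w^{q}$ for any $w\in\mathbb{F}_{q^2}$, applying this with $w=L^{q}z^{q}$ gives
$$A_{1}=CL^{q}z^{q}+C^{q}Lz,\qquad A_{2}=C^{q}+Cz^{2q}.$$
A direct expansion of $\Tr_{n}^{4n}(c^{q+1}v^{q})$, collecting coefficients of $z$ and $z^{q}$, rewrites the hypothesis $\Tr_{n}^{4n}(c(v+v^{q}))+\Tr_{n}^{4n}(c^{q+1}v^{q})=0$ as
$$C^{q}Mz+CM^{q}z^{q}=0,\qquad M=\lambda(1+c^{-1})+\lambda^{-1}(1+c).$$

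I would then argue by contradiction, supposing $A_{1}=A_{2}=0$ (the case $b=0$ is trivial, since then $A_{2}=C^{q}\neq0$). From $A_{2}=0$ we get $z^{2q}=C^{q}/C$; applying the Frobenius $x\mapsto x^{q}$ together with $z^{q^2}=z$, $C^{q^2}=C$ yields $z^{2}=C/C^{q}$, and since squaring is injective in characteristic two these two identities force the single relation $z^{q}=(C^{q}/C)z$. Substituting this into the expression for $A_{1}$ collapses it to $A_{1}=C^{q}z(L+L^{q})=C^{q}z\,\Tr_{n}^{4n}(\lambda)$, so $A_{1}=0$ gives $\Tr_{n}^{4n}(\lambda)=0$. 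The same substitution turns the hypothesis into $C^{q}z(M+M^{q})=0$, i.e.\ $M+M^{q}=0$; and, writing $M=(\lambda+\lambda^{-1})+(\lambda c^{-1}+\lambda^{-1}c)$, we have $M+M^{q}=\Tr_{n}^{4n}(\lambda)+\Tr_{n}^{4n}(\lambda/c)$, so this says $\Tr_{n}^{4n}(\lambda)=\Tr_{n}^{4n}(\lambda/c)$, whence also $\Tr_{n}^{4n}(\lambda/c)=0$.

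To finish I would prove the auxiliary fact that any $\mu\in\mu_{q^2+1}$ with $\Tr_{n}^{4n}(\mu)=0$ must equal $1$. This follows from the factorization $\Tr_{n}^{4n}(\mu)=\mu+\mu^{q}+\mu^{-1}+\mu^{-q}=(\mu+\mu^{q})(1+\mu^{-(q+1)})$: if the first factor vanishes then $\mu\in\mathbb{F}_{q}^{*}\cap\mu_{q^2+1}$, while if the second vanishes then $\mu\in\mu_{q+1}\cap\mu_{q^2+1}$, and in both cases $\gcd(q-1,q^2+1)=\gcd(q+1,q^2+1)=1$ (valid since $q$ is even) forces $\mu=1$. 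Applying this to the two unit-circle elements $\lambda$ and $\lambda/c$ gives $\lambda=1$ and $\lambda/c=1$, hence $c=1$, contradicting $c\in\mu_{q^2+1}\setminus\{1\}$.

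The only genuine obstacle is the routine-but-delicate bookkeeping in the polar expansion of $\Tr_{n}^{4n}(c^{q+1}v^{q})$, and the observation that after imposing $A_{2}=0$ both $A_{1}$ and the hypothesis degenerate into trace conditions on $\mu_{q^2+1}$; once that collapse is recognized, the coprimality fact closes the argument immediately. Alternatively, the intended route may instead deduce $\lambda=1$ from $A_{1}=0$ first and then read off $M=C\notin\mathbb{F}_{q}$ (equivalently $\Tr_{n}^{4n}(c)\neq0$) to contradict $M+M^{q}=0$ directly; the two finishes are equivalent.
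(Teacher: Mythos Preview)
Your argument is correct and follows the same core mechanism as the paper: pass to the polar decomposition $b=\lambda z$, use $A_{2}=0$ to pin down $z^{q}=(C^{q}/C)z$, and substitute this into $A_{1}=0$ to force a trace condition on $\lambda\in\mu_{q^{2}+1}$. One small slip: you assert that $L=\lambda+\lambda^{-1}$ is nonzero, but $L=0$ exactly when $\lambda=1$ (i.e.\ when $b\in\mathbb{F}_{q^{2}}$); fortunately your later steps never use $L\neq0$, so this is harmless.

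The organizational difference from the paper is worth noting. The paper splits into cases ($b=0$; $b\in\mu_{q^{2}+1}$; $b\in\mathbb{F}_{q^{2}}\setminus\{0,1\}$; the generic case), and in the generic case it does \emph{not} invoke the hypothesis at all: from $A_{1}=A_{2}=0$ alone it squares $A_{1}=0$, plugs in $z^{2q}=C^{q}/C$, and gets $L=L^{q}$, contradicting $\lambda\neq1$. The hypothesis is used only in the $b\in\mathbb{F}_{q^{2}}$ case, where $A_{1}=0$ is automatic and one needs an extra equation. Your route instead packages the hypothesis once and for all as $C^{q}Mz+CM^{q}z^{q}=0$ and carries it alongside $A_{1}$; after the substitution $z^{q}=(C^{q}/C)z$ both collapse to $\Tr_{n}^{4n}(\lambda)=0$ and $\Tr_{n}^{4n}(\lambda/c)=0$, and the auxiliary lemma on $\mu_{q^{2}+1}$ kills all cases simultaneously by forcing $\lambda=1=\lambda/c$, hence $c=1$. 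This buys you a case-free proof at the cost of the extra bookkeeping in expanding $\Tr_{n}^{4n}(c^{q+1}v^{q})$; the paper's version avoids introducing $M$ but pays with the four-way case split.
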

\begin{proof}
Firstly observe that if $b=0$, then $v=0$, $A_{1}=0$ and $A_{2}=c^{q}+c^{-q}\ne 0$. Similarly $b^{q^2+1}=1$ gives that $A_{2}=c^{q}+c^{-q}+c+c^{-1}\ne0$. If $b=b^{q^2}$ and $b\notin \{0,1\}$, it suffices to show $A_{2}\ne0$ due to $A_{1}=0$. Since $A_{2}=0$ means $b^{1+q+q^2+q^3}=1$, which gives that $b^{q+1}=1$, and then
\begin{eqnarray*}
   A_{1}&=& \Tr_{n}^{4n}(c^{q+1}v^q)=\Tr_{n}^{4n}\left(c^{q+1}(b^{q^2}+b^{q^3})\right)=\Tr_{n}^{4n}\left(c^{q+1}(b+b^{-1})\right) \\
   &=& (b+b^{-1})\Tr_{n}^{4n}\left(c^{q+1}\right)\ne 0,
\end{eqnarray*}
due to $b+b^{-1}\in \mathbb{F}_{q}^{*}$. Now assume that $b\notin  \mathbb{F}_{q^2}\cup \mu_{q^2+1}$, the decomposition $b=\lambda z$ indicates that $\lambda\ne 1$ and $z\ne 1$. Note that $A_{2}=0$ gives
\begin{equation}\label{eq4-5}
z^{2q}=\frac{c^{q}+c^{-q}}{c+c^{-1}}.
\end{equation}
If $A_{1}=0$, then we have
\begin{eqnarray*}
   && (c+c^{-1})(b^{q}+b^{q^3})+(c^{q}+c^{-q})(b+b^{q^2}) \\
   &=& (c+c^{-1})(\lambda^{q}+\lambda^{-q})z^q+(c^q+c^{-q})(\lambda+\lambda^{-1})z=0.
\end{eqnarray*}
By squaring the second equality and plugging into which with \eqref{eq4-5}, we obtain that
$$ (c+c^{-1})^2(\lambda^{q}+\lambda^{-q})^2\frac{c^{q}+c^{-q}}{c+c^{-1}}+(c^q+c^{-q})^2(\lambda+\lambda^{-1})^2\frac{c+c^{-1}}{c^q+c^{-q}} =0,$$
and then $\lambda+\lambda^{-1}=\lambda^{q}+\lambda^{-q}$, which contradicts with the assumption $\lambda\in \mu_{q^2+1}\setminus\{1\}$.
\end{proof}
\begin{prop}\label{prop4}
For $v\in \mathbb{F}_{q^4}$, if $\Tr_{n}^{4n}\left(c(v+v^q)\right)=\Tr_{n}^{4n}\left(c^{1+q}v^{q}\right)=0$, then
$$\Tr_{n}^{4n}(cv^{1+q})=\Tr_{n}^{4n}(c^{1+q}v^{2q})=0.$$
\end{prop}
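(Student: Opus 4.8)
The plan is to first collapse the two-part conclusion into a single field identity, and then resolve that identity by linear elimination. Writing $c=d^2$ with $d^{q^2+1}=1$ and keeping the notation $\nu=d^{1+q^3}v+d^{q^2+q}v^{q^2}\in\mathbb{F}_{q^2}$ from the proof of Proposition \ref{prop1}, we have the identities $\Tr_{n}^{4n}(cv^{1+q})=\nu^{1+q}$ and $\Tr_{n}^{4n}(c^{1+q}v^{2q})=(\nu+\nu^q)^2$. Since $\nu^{1+q}$ is the norm of $\nu\in\mathbb{F}_{q^2}$ over $\mathbb{F}_{q}$, the first conclusion is equivalent to $\nu=0$, and $\nu=0$ forces the second as well. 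As $\nu=d^{1-q}\left(v+c^{q-1}v^{q^2}\right)$, the whole conclusion therefore reduces to proving the single equation $cv=c^{q}v^{q^2}$.

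Next I would turn the two hypotheses into linear equations in the Frobenius conjugates $v,v^{q},v^{q^2},v^{q^3}$, with coefficients built from the $c^{q^{i}}$ (using $c^{q^2}=c^{-1}$ and $c^{q^3}=c^{-q}$). The essential extra ingredient is that $v=b^{q}+b^{q^2}$ yields $\Tr_{n}^{4n}(v)=0$, i.e. $v^{q^3}=v+v^{q}+v^{q^2}$; substituting this removes $v^{q^3}$ and, after grouping, collapses the two hypotheses to
\begin{eqnarray*}
s\,v+(s+t)v^{q}+t\,v^{q^2}&=&0,\\
c^{-q}s\,v+(c^{1+q}+c^{-1-q})v^{q}+c^{-1}t\,v^{q^2}&=&0,
\end{eqnarray*}
where $s=c+c^{-1}$ and $t=c^{q}+c^{-q}=s^{q}$. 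Without the relation $\Tr_{n}^{4n}(v)=0$ the statement in fact fails, so this is precisely where the hypothesis $v=b^{q}+b^{q^2}$ is genuinely used.

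Finally I would eliminate $v^{q}$. Since $s+t=\Tr_{n}^{4n}(c)\ne0$ for $c\in\mu_{q^2+1}\setminus\{1\}$, solving the first equation for $v^{q}$ and inserting it into the second gives a relation $\Phi\,v+\Psi\,v^{q^2}=0$; the main computational step is to simplify its coefficients using the identities $c^{-q}(s+t)+(c^{1+q}+c^{-1-q})=t(c+c^{-q})$ and $(c^{1+q}+c^{-1-q})+c^{-1}(s+t)=s(c^{-1}+c^{q})$, which give $\Phi=st(c+c^{-q})$ and $\Psi=st(c^{-1}+c^{q})$. Dividing by $st\ne0$ and then by $c^{1+q}+1\ne0$ (note $c^{1+q}=1$ together with $c^{1+q^2}=1$ would force $c=1$), the relation becomes $c^{-q}v+c^{-1}v^{q^2}=0$, i.e. exactly $cv=c^{q}v^{q^2}$. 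This is the reduced conclusion, so $\nu=0$ and both trace conditions vanish. I expect the main obstacle to be the coefficient bookkeeping in this elimination, together with checking that each quantity I divide by ($s$, $t$, $s+t$, and $c^{1+q}+1$) is nonzero, all of which follow from $c\in\mu_{q^2+1}\setminus\{1\}$.
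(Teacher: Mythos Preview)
Your argument is correct. The reduction to $\nu=0$, the derivation of the two linear relations in $v,v^{q},v^{q^2}$ after using $\Tr_{n}^{4n}(v)=0$, and the elimination of $v^{q}$ with the factorizations $c+c^{-q}=c^{-q}(c^{1+q}+1)$ and $c^{-1}+c^{q}=c^{-1}(c^{1+q}+1)$ all check out, as do the nondegeneracy claims for $s$, $t$, $s+t$, and $c^{1+q}+1$.

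Your route, however, is genuinely different from the paper's. The paper does not linearize in the Frobenius conjugates of $v$; instead it writes $v=\rho y$ with $\rho\in\mu_{q^2+1}$ and $y\in\mathbb{F}_{q^2}^{*}$, turns the three conditions $\Tr_{n}^{4n}(v)=\Tr_{n}^{4n}(c(v+v^{q}))=\Tr_{n}^{4n}(c^{1+q}v^{q})=0$ into relations among $\rho,y,c$, and then argues via the auxiliary variable $\xi=\rho^{q-1}$ that $c^{q}\xi=1$, from which $\Tr_{n}^{4n}(cv^{1+q})=y^{1+q}\Tr_{n}^{4n}(c^{q}\xi)=0$ follows. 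Your approach is shorter and stays entirely within $\mathbb{F}_{q}$-linear algebra over the Frobenius orbit of $v$, reusing the $\nu$-identity from Proposition~\ref{prop1} to collapse both conclusions into the single equation $cv=c^{q}v^{q^2}$; the paper's argument is more structural (it actually pins down $\rho^{q-1}=c^{-q}$) but requires more case handling and the polar decomposition machinery. Both methods rely on the standing convention $v=b^{q}+b^{q^2}$ through the relation $\Tr_{n}^{4n}(v)=0$, and you are right that the statement would fail for arbitrary $v\in\mathbb{F}_{q^4}$ without it.
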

\begin{proof}
By Proposition \ref{prop1}, it suffices to show $\Tr_{n}^{4n}(cv^{1+q})=0$. The proposition obviously holds if $v=0$. Assume $0\ne v=\rho y$ with $\rho\in \mu_{q^2+1}$ and $y\in \mathbb{F}_{q^2}^{*}$.  The conditions $\Tr_{n}^{4n}(v)=\Tr_{n}^{4n}\left(c(v+v^q)\right)=\Tr_{n}^{4n}\left(c^{1+q}v^{q}\right)=0$ give that
\begin{equation}\label{eq4-8}
(\rho+\rho^{-1})y+(\rho^{q}+\rho^{-q})y^q=0,
\end{equation}
\begin{equation}\label{eq4-9}
(c\rho+c^{-1}\rho^{-1}+c^q\rho^{-1}+c^{-q}\rho)y+(c^{q}\rho^q+c^{-q}\rho^{-q}+c\rho^q+c^{-1}\rho^{-q})y^q=0
\end{equation}
and
\begin{equation}\label{eq4-10}
(c^{q-1}\rho^{-1}+c^{-q+1}\rho)y+(c^{q+1}\rho^{q}+c^{-1-q}\rho^{-q})y^q=0.
\end{equation}
Note that if $\rho=1$, then \eqref{eq4-9} gives $y=y^{q}$, and then \eqref{eq4-10} does not hold due to $c^{q-1}+c^{-q+1}+c^{q+1}+c^{-q-1}=(c+c^{-1})(c^q+c^{-q})\ne0$. So we must have $\rho\ne 1$. By equations \eqref{eq4-8} and \eqref{eq4-9}, we get
\begin{eqnarray*}
   && (\rho+\rho^{-1})(c^{q}\rho^q+c^{-q}\rho^{-q}+c\rho^q+c^{-1}\rho^{-q})
   +(\rho^q+\rho^{-q})(c\rho+c^{-1}\rho^{-1}+c^q\rho^{-1}+c^{-q}\rho)  \\
   &=& \rho^{q+1}(c^q+c^{-q})+\rho^{q-1}(c+c^{-1})+\rho^{-q+1}(c+c^{-1})+\rho^{-q-1}(c^{q}+c^{-q}) \\
   &=& \Tr_{n}^{4n}\left((c+c^{-1})\rho^{q-1}\right)=0.
\end{eqnarray*}
The combination of \eqref{eq4-8} and \eqref{eq4-10} gives that
$$(\rho+\rho^{-1})(c^{1+q}\rho^q+c^{-1-q}\rho^{-q})+(\rho^q+\rho^{-q})(c^{q-1}\rho^{-1}+c^{-q+1}\rho)=0,$$
which is equivalent to
$$\Tr_{n}^{4n}((\rho+\rho^{-1})c^{1+q}\rho^q)=\Tr_{n}^{4n}((c+c^{-1})c^{q}\rho^{q-1})=0.$$
Denote $\xi=\rho^{q-1}\in \mu_{q^2+1}\setminus\{1\}$, from $\Tr_{n}^{4n}\left((c+c^{-1})\xi\right)=0$ and $\Tr_{n}^{4n}((c+c^{-1})c^{q}\xi)=0$, we have
\begin{eqnarray*}
  (c+c^{-1})(\xi+\xi^{-1}) &=& \alpha  \\
  (c+c^{-1})(c^q\xi+c^{-q}\xi^{-1}) &=&\beta
\end{eqnarray*}
for some $\alpha,\beta\in \mathbb{F}_{q}$. Obviously $\alpha\ne0$. Then
$${c^q\xi+c^{-q}\xi^{-1}}=\frac{\beta}{\alpha}({\xi+\xi^{-1}}),$$
together with $c^{q}+c^{-q}=\frac{\alpha}{\xi^{q}+\xi^{-q}}$, we get
$c^q(\xi+\xi^{-1})=\frac{\beta}{\alpha}(\xi+\xi^{-1})+\frac{\alpha\xi^{-1}}{\xi^{q}+\xi^{-q}}$, i.e,
$$c=\frac{\alpha\xi^{q}}{(\xi+\xi^{-1})(\xi^q+\xi^{-q})}+\frac{\beta}{\alpha}.$$
If $\beta\ne0$, the condition $c^{q^2+1}=1$ implies
\begin{eqnarray*}
   1&=&\left(\frac{\alpha\xi^{q}}{(\xi+\xi^{-1})(\xi^q+\xi^{-q})}+\frac{\beta}{\alpha}\right)
       \left(\frac{\alpha\xi^{-q}}{(\xi+\xi^{-1})(\xi^q+\xi^{-q})}+\frac{\beta}{\alpha}\right)  \\
   &=&  \frac{\alpha^2}{(\xi+\xi^{-1})^2(\xi^q+\xi^{-q})^2}+\frac{\beta^2}{\alpha^2}+\frac{\beta}{\xi+\xi^{-1}}
\end{eqnarray*}
and then $\xi+\xi^{-1}\in \mathbb{F}_{q}$, which contradicts. This proves that $\beta=0$, i.e, $c^q\xi\in \mathbb{F}_{q^2}$, which actually indicates $c^q\xi=1$ due to $c^{q}\xi\in \mu_{q^2+1}$.
Then
\begin{eqnarray*}
   && \Tr_{n}^{4n}(cv^{1+q})=\Tr_{n}^{4n}(c\rho^{1+q}y^{1+q})=y^{1+q}\Tr_{n}^{4n}(c\rho^{1+q}) \\
   &=& y^{1+q}\Tr_{n}^{4n}(c^q\rho^{q-1})=y^{1+q}\Tr_{n}^{4n}(c^q\xi)=0.
\end{eqnarray*}
This completes the proof.
\end{proof}
The following is about the factorization of a polynomial $G(u)\in \mathbb{F}_{q}[u]$, which is an important observation for the main proof.
\begin{prop}\label{prop5}
For any $b\in \mathbb{F}_{q^4}$, the polynomial
\begin{equation}\label{eq4-11}
G(u)=G_{0}+G_{1}u+G_{2}u^2+G_{3}u^3+G_{4}u^4
\end{equation}
where
\begin{eqnarray*}
  G_{0} &=& \Tr_{n}^{4n}\left(c^2(b^{2q}+b^{2q^3})\right)+\Tr_{n}^{4n}\left(c^{2(1+q)}(b^{2q^2}+b^{2q^3})\right) \\
  G_{1} &=& \Tr_{n}^{4n}\left(c(b^q+b^{q^3})\right)\Tr_{n}^{4n}\left(c^{1+q}(b^{q^2}+b^{q^3})\right) \\
  G_{2} &=&  \Tr_{n}^{4n}(b^2)+\Tr_{n}^{4n}\left(c^2(b^{2q^2}+b^{q+q^2}+b^{q^2+q^3}+b^{q+q^3})\right)\\
           &&+\Tr_{n}^{4n}(c^{1+q})(1+b^{1+q+q^2+q^3})+\Tr_{n}^{4n}\left(c^{1+q}(b^{q+q^2}+b^{1+q^3})\right)\\
  G_{3} &=&  \Tr_{n}^{4n}(c)(1+b^{1+q+q^2+q^3})+\Tr_{n}^{4n}(b)\Tr_{n}^{4n}(cb^{q^2})\\
  G_{4} &=&  (1+b^{q+q^3})(1+b^{1+q^2})
\end{eqnarray*}
has a factorization as
$$G(u)=\left(u^2+\Tr_{n}^{4n}(c)u+\Tr_{n}^{4n}(c^{1+q})\right)\left((1+b^{q+q^3})(1+b^{1+q^2})u^2+Au+B\right),$$
where
\begin{eqnarray*}
  A &=& \Tr_{n}^{4n}\left(c(b^q+b^{q^2})(b^{q^2}+b^{q^3})\right)=\Tr_{n}^{4n}(cv^{q+1}) \\
  B &=&  \Tr_{n}^{4n}\left(c^{1+q}(b^{2q^2}+b^{2q^3})\right)=\Tr_{n}^{4n}(c^{1+q}v^{2q}).
\end{eqnarray*}
\end{prop}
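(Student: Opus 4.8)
The plan is to establish the factorization directly, by expanding the product of the two proposed quadratic factors and matching the result against $G_0,\dots,G_4$ coefficient by coefficient. Write $p(u)=u^2+\Tr_{n}^{4n}(c)\,u+\Tr_{n}^{4n}(c^{1+q})$ and $r(u)=G_4u^2+Au+B$. The coefficient of $u^4$ in $p(u)r(u)$ is $G_4$, so the top identity holds by the definition of $G_4$; the remaining requirements are
\begin{align*}
G_3&=A+\Tr_{n}^{4n}(c)\,G_4,\\
G_2&=B+\Tr_{n}^{4n}(c)\,A+\Tr_{n}^{4n}(c^{1+q})\,G_4,\\
G_1&=\Tr_{n}^{4n}(c)\,B+\Tr_{n}^{4n}(c^{1+q})\,A,\\
G_0&=\Tr_{n}^{4n}(c^{1+q})\,B.
\end{align*}
It is also convenient to record, using $c\in\mu_{q^2+1}$, that $p(u)=\bigl(u+(c+c^{-1})\bigr)\bigl(u+(c^q+c^{-q})\bigr)$, so that $\Tr_{n}^{4n}(c^{1+q})=(c+c^{-1})(c^q+c^{-q})$; this is the form in which the constant term of $p$ will be used.

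The main computational device is the linearization identity
\[
\Tr_{n}^{4n}(X)\,\Tr_{n}^{4n}(Y)=\sum_{k=0}^{3}\Tr_{n}^{4n}\!\bigl(X\,Y^{q^k}\bigr),\qquad X,Y\in\mathbb{F}_{q^4},
\]
obtained by grouping the sixteen cross-terms $X^{q^i}Y^{q^j}$ according to $j-i\bmod 4$ and using that $x\mapsto x^q$ has order $4$ on $\mathbb{F}_{q^4}$. This is precisely what converts the products $\Tr_{n}^{4n}(c)\,A$, $\Tr_{n}^{4n}(c^{1+q})\,A$ and $\Tr_{n}^{4n}(c)\,B$ appearing on the right-hand sides above into single relative traces of monomials in $c$ and $b$, after which they can be matched termwise with the definitions of the $G_i$. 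Alongside it I would use the Frobenius relations $c^{q^2}=c^{-1}$, $c^{q^3}=c^{-q}$ coming from $c\in\mu_{q^2+1}$, together with the purely notational rewrites $v=b^q+b^{q^2}$, $v^q=b^{q^2}+b^{q^3}$, $A=\Tr_{n}^{4n}(cv^{q+1})$ and $B=\Tr_{n}^{4n}(c^{1+q}v^{2q})$ that reconcile the two displayed forms of $A$ and $B$.

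The four scalar identities are then checked in turn. The constant term is the mildest: since $G_0$ and $B$ involve only squared monomials, writing $c=d^2$ exhibits $G_0$, $\Tr_{n}^{4n}(c^{1+q})=(c+c^{-1})(c^q+c^{-q})$ and $B$ all as squares, so $G_0=\Tr_{n}^{4n}(c^{1+q})B$ reduces, on extracting square roots in characteristic two and applying the linearization identity, to a relation of half the degree. The coefficients of $u^1$ and $u^3$ each require expanding one product of traces and cancelling the terms carrying the factor $b^{1+q+q^2+q^3}$ against those supplied by $G_4$ and by $\Tr_{n}^{4n}(c^{1+q})$. The genuine obstacle is the coefficient of $u^2$: it is the only identity in which all three terms $B$, $\Tr_{n}^{4n}(c)\,A$ and $\Tr_{n}^{4n}(c^{1+q})\,G_4$ are present, so applying the linearization identity yields the largest collection of trace-of-monomial terms, and the real work is to verify that every surplus cross-term—one whose exponents on $c$ and on the conjugates of $b$ match no summand of $G_2$—cancels once $c^{q^2}=c^{-1}$ and $c^{q^3}=c^{-q}$ are substituted. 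I expect this bookkeeping to be the main difficulty, while the overall strategy—expand, linearize the products of traces, reduce by the $\mu_{q^2+1}$ relations, and compare—is routine.
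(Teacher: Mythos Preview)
Your proposal is correct and follows essentially the same route as the paper's proof: expand the product of the two quadratics, compare the five coefficients, and reduce each resulting product of traces to a single trace by pulling the $\mathbb{F}_q$-valued factor inside and then using $c^{q^2}=c^{-1}$, $c^{q^3}=c^{-q}$. The paper carries out exactly these four coefficient checks (with the $u^2$ case indeed being the longest), differing only in that it never takes the square-root shortcut you suggest for $G_0$ and uses the simpler identity $\Tr_n^{4n}(X)\cdot\Tr_n^{4n}(Y)=\Tr_n^{4n}\bigl(\Tr_n^{4n}(X)\,Y\bigr)$ for $\Tr_n^{4n}(X)\in\mathbb{F}_q$ rather than your full linearization formula.
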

\begin{proof}
The expression of $G(u)$ seems a bit complicated and the factorization can be verified directly, so we give the proof in Appendix A.
\end{proof}
In the next we give the last Proposition in this paper, the proof of which is a bit technical and is critical to complete the final proof.
\begin{prop}\label{prop6}
Assume that $b^{1+q^2}\ne 1$, $\Tr_{n}^{4n}(c^{1+q}v^{2q})=0$ and $\Tr_{n}^{4n}(cv^{q+1})\ne0$, and denote
\begin{equation}\label{eq4-12}
\Omega=\frac{\Tr_{n}^{4n}(c^{q+1})(1+b^{1+q+q^2+q^3})+\Tr_{n}^{4n}(c^2b^{q+q^3})}{\Tr_{n}^{4n}(c^{q+1}v^q)^2}
   +\frac{(1+b^{q+q^3})(1+b^{1+q^2})}{\Tr_{n}^{4n}(cv^{q+1})},
\end{equation}
then $\Omega\in \mathbb{F}_{q}$ and $\Tr_{1}^{n}(\Omega)=1$.
\end{prop}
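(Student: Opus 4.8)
The statement has two halves: that $\Omega\in\mathbb{F}_q$, and that $\Tr_1^n(\Omega)=1$. The plan is to handle the two summands of $\Omega=\frac{N_1}{D_1^2}+\frac{G_4}{A}$ (where $D_1=\Tr_n^{4n}(c^{q+1}v^q)$, $A=\Tr_n^{4n}(cv^{q+1})$, $G_4=(1+b^{q+q^3})(1+b^{1+q^2})$) after passing to the polar decomposition $b=\lambda z$ with $\lambda\in\mu_{q^2+1}$, $z\in\mathbb{F}_{q^2}$, under which the hypotheses become transparent: one computes $b^{1+q^2}=z^2$, $b^{q+q^3}=z^{2q}$ and $b^{1+q+q^2+q^3}=z^{2(1+q)}$, so $b^{1+q^2}\neq1$ just says $z\neq1$. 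For the first half I would observe that every constituent of $\Omega$ lies in $\mathbb{F}_q$: each $\Tr_n^{4n}(\cdot)$ is a relative trace into $\mathbb{F}_q$, the factor $1+b^{1+q+q^2+q^3}$ is a norm, and $(1+b^{q+q^3})(1+b^{1+q^2})=(1+z^{2q})(1+z^2)$ is manifestly in $\mathbb{F}_q$. One preliminary point to settle is that $D_1=\Tr_n^{4n}(c^{q+1}v^q)\neq0$, so the first summand is defined; I would verify this directly from the polar-decomposition expression for $D_1$ recorded below, given that $A\neq0$.

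Next I would record the simplifications that drive the trace computation. Writing $P=c+c^{-1}$ and $Q=c^q+c^{-q}=P^q$, one has $\Tr_n^{4n}(c)=P+Q$ and $\Tr_n^{4n}(c^{1+q})=PQ$; since $\Tr_n^{4n}(c)\neq0$ we get $P\neq Q$, i.e. $P\notin\mathbb{F}_q$. A short expansion identifies the numerator of the first summand as a norm,
\[
\Tr_n^{4n}(c^{q+1})(1+b^{1+q+q^2+q^3})+\Tr_n^{4n}(c^2b^{q+q^3})=(P+Qz^2)(P+Qz^2)^q=N_{\mathbb{F}_{q^2}/\mathbb{F}_q}(P+Qz^2).
\]
Likewise $G_4=(1+z^{2q})(1+z^2)=N_{\mathbb{F}_{q^2}/\mathbb{F}_q}(1+z)^2$ is a square. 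Moreover, by Proposition \ref{prop1} one has $\Tr_n^{4n}(c^{1+q}v^{2q})=(\nu+\nu^q)^2$ and $A=\nu^{1+q}$; the hypothesis $\Tr_n^{4n}(c^{1+q}v^{2q})=0$ therefore forces $\nu\in\mathbb{F}_q^{*}$ (as $A\neq0$), whence $A=\nu^2$ is also a square. Thus the second summand equals $(g/\nu)^2$ with $g=(1+z)(1+z^q)\in\mathbb{F}_q$, and $\Tr_1^n$ of it equals $\Tr_1^n(g/\nu)$. Together these give $\Omega\in\mathbb{F}_q$.

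To finish the second half I would argue that $x^2+x+\Omega$ has no root in $\mathbb{F}_q$, which is equivalent to $\Tr_1^n(\Omega)=1$. Concretely, I would assume a root $y\in\mathbb{F}_q$ existed, clear the common square denominator $(D_1\nu)^2$ to obtain the identity $(D_1\nu)^2(y^2+y)=N_{\mathbb{F}_{q^2}/\mathbb{F}_q}(P+Qz^2)\,\nu^2+g^2 D_1^2$, and then substitute the polar-decomposition value $D_1=\Tr_n^{2n}(zQM)$ with $M=\lambda^{-1}c+\lambda c^{-1}\in\mathbb{F}_{q^2}$, together with the constraint coming from $\Tr_n^{4n}(c^{1+q}v^{2q})=0$, which (writing $c=d^2$ with $d^{1+q^2}=1$ as in Proposition \ref{prop1}) simplifies to $z\,(\lambda^{-1}d+\lambda d^{-1})/(d+d^{-1})\in\mathbb{F}_q$. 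The aim is to reach a forced equality such as $P=P^q$, i.e. $c+c^{-1}=c^q+c^{-q}$, contradicting $P\neq Q$ (equivalently $\Tr_n^{4n}(c)\neq0$); this mirrors the contradictions obtained at the ends of Propositions \ref{prop2}, \ref{prop3} and \ref{prop4}.

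The hard part is precisely this last step. It amounts to showing that the combined numerator $N_{\mathbb{F}_{q^2}/\mathbb{F}_q}(P+Qz^2)\,\nu^2+g^2 D_1^2$ is the norm of an element of $\mathbb{F}_{q^2}\setminus\mathbb{F}_q$ whose $\Tr_n^{2n}$ equals $D_1\nu$; equivalently, one must pin down how the two denominators $D_1$ and $\nu$ interact once the constraint is imposed. The bookkeeping of the $\lambda$- and $z$-dependence (neither of which appears in $N_1$ but both of which appear in $D_1$ and $\nu$) is where the genuine work lies, and it is the non-degeneracy $P\neq Q$, forced by $c\in\mu_{q^2+1}\setminus\{1\}$, that in the end makes the trace equal to $1$ rather than $0$.
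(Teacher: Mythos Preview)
Your preliminary observations are correct and useful: the numerator $N_1$ is indeed $N_{\mathbb{F}_{q^2}/\mathbb{F}_q}(P+Qz^2)$, the hypothesis forces $\nu\in\mathbb{F}_q^*$ so that $A=\nu^2$, and $G_4=g^2$. But the plan for the hard half has a genuine gap: you treat $D_1$ and $\nu$ as independent denominators (you clear ``the common square denominator $(D_1\nu)^2$''), whereas the crucial identity you are missing is
\[
D_1^2=\Tr_n^{4n}(c^{q+1}v^q)^2=\Tr_n^{4n}(c^{q+1})\cdot \Tr_n^{4n}(cv^{q+1})=PQ\cdot A=PQ\,\nu^2.
\]
This collapses the two summands over the single denominator $PQ\,\nu^2$, and a two-line computation then gives the closed form
\[
\Omega=\frac{(P+Q)\bigl(Pz^{2q}+Qz^2\bigr)}{PQ\,\nu^2}.
\]
In particular this already shows $D_1\neq0$ (your ``preliminary point''), and it removes the need to ``pin down how the two denominators $D_1$ and $\nu$ interact.''

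From here the paper does not argue by contradiction. It uses the constraint $\Tr_n^{4n}(c^{1+q}v^{2q})=0$ to eliminate $z$ in favour of $\lambda$ and reduces $\Omega$ to the shape $\dfrac{\beta^{q+1}}{(\beta+\beta^q)^2}$ for an explicit $\beta\in\mathbb{F}_{q^2}\setminus\mathbb{F}_q$ depending only on $c$ and $\lambda$, which gives $\Tr_1^n(\Omega)=1$ immediately. Your proposed route (assume a root $y\in\mathbb{F}_q$ of $x^2+x+\Omega$ and push towards $P=P^q$) is not a priori wrong, but as written it is a plan rather than a proof; without the identity $D_1^2=PQ\,\nu^2$ the bookkeeping you describe would not close, and once you have that identity the direct normal-form reduction is both shorter and more transparent than the contradiction argument.
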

\begin{proof}
Obviously $\Omega\in \mathbb{F}_{q}$. We note that $\Tr_{n}^{4n}(c^{1+q}v^{2q})=0$ is equivalent to
\begin{equation}\label{neweq}
\Tr_{n}^{4n}(c(v+v^{q}))+\Tr_{n}^{4n}(c^{q+1}v^q)=0,
\end{equation}
which comes from
\begin{eqnarray*}
   && \Tr_{n}^{4n}\left(c(b^q+b^{q^3})\right)+\Tr_{n}^{4n}\left(c^{1+q}(b^{q^2}+b^{q^3})\right) \\
   &=& \Tr_{n}^{4n}\left((c+c^{-1})(c^q+1)b^{q^3}\right)
\end{eqnarray*}
and \begin{eqnarray*}
       && \Tr_{n}^{4n}\left((c+c^{-1})(c^q+1)b^{q^3}\right)^2 \\
       &=&  \Tr_{n}^{4n}\left((c+c^{-1})^2(c^{2q}+1)b^{2q^3}\right) \\
       &=& \Tr_{n}^{4n}\left((c+c^{-1})^2(c^q+c^{-q})c^qb^{2q^3}\right)\\
       &=& (c+c^{-1})(c^{q}+c^{-q})\Tr_{n}^{4n}\left((c+c^{-1})c^qb^{2q^3}\right)\\
       &=& (c+c^{-1})(c^{q}+c^{-q})\Tr_{n}^{4n}\left(c^{q+1}(b^{2q^2}+b^{2q^3})\right).
    \end{eqnarray*}
There is a relation between the two denominators in the expression of $\Omega$ as
 \begin{equation}\label{eq4-13}
 \Tr_{n}^{4n}(c^{q+1}v^q)^2=\Tr_{n}^{4n}(c^{q+1})\Tr_{n}^{4n}(cv^{q+1}).
 \end{equation}
 We write $\Tr_{n}^{4n}(c(v+v^{q})=\alpha+\alpha^q$, where $\alpha\triangleq(c+c^{-1})(v+v^q)\in \mathbb{F}_{q^2}$, then
\begin{eqnarray*}
   v^q&=&\frac{\alpha}{c+c^{-1}}+v  \\
   v^{q^2}&=&\frac{\alpha^q}{c^{q}+c^{-q}}+\frac{\alpha}{c+c^{-1}}+v   \\
   v^{q^3}&=&\frac{\alpha^q}{c^{q}+c^{-q}}+v
\end{eqnarray*}
and \begin{eqnarray*}
       && \Tr_{n}^{4n}(c^{q+1}v^q)
       =c^{q+1}v^q+c^{-1+q}v^{q^2}+c^{-q-1}v^{q^3}+c^{1-q}v  \\
       &=& c^{q+1}\left(\frac{\alpha}{c+c^{-1}}+v\right)+c^{-1+q}\left(\frac{\alpha^q}{c^{q}+c^{-q}}+\frac{\alpha}{c+c^{-1}}+v\right)
         +c^{-q-1}\left(\frac{\alpha^q}{c^{q}+c^{-q}}+v\right)+c^{1-q}v \\
       &=&\Tr_{n}^{4n}(c^{q+1})v+\frac{\alpha}{c+c^{-1}}(c^{q+1}+c^{-1+q})+\frac{\alpha^q}{c^q+c^{-q}}(c^{-1+q}+c^{-q-1}) \\
       &=&\Tr_{n}^{4n}(c^{q+1})v+\alpha c^q+\alpha^q c^{-1}.
           \end{eqnarray*}
The condition \eqref{neweq} can be rewritten as $\alpha+\alpha^q=\Tr_{n}^{4n}(c^{q+1})v+\alpha c^q+\alpha^q c^{-1}$,
which gives that
\begin{equation}\label{eq4-14}
v = \frac{\alpha+\alpha^q+\alpha c^q+\alpha^{q}c^{-1}}{\Tr_{n}^{4n}(c^{q+1})},
\end{equation}
then
  $$v^q =\frac{\alpha+\alpha^q+\alpha^q c^{-1}+\alpha c^{-q}}{\Tr_{n}^{4n}(c^{q+1})},$$
and
\begin{eqnarray*}
  &&v^{q+1}\Tr_{n}^{4n}(c^{q+1})^2 \\
  &=&(\alpha+\alpha^q+\alpha c^q+\alpha^qc^{-1})(\alpha+\alpha^q+\alpha^qc^{-1}+\alpha c^{-q})  \\
   &=&(\alpha+\alpha^q)^2+(\alpha+\alpha^q)(\alpha c^q+\alpha c^{-q})+(\alpha c^q+\alpha^qc^{-1})(\alpha^qc^{-1}+\alpha c^{-q})  \\
  &=& \alpha^{2q}+(\alpha+\alpha^q)\alpha( c^q+c^{-q})+\alpha^{q+1}(c^{q-1}+c^{-q-1})+\alpha^{2q}c^{-2}\\
  &=& c^{-1}(c+c^{-1})\alpha^{2q}+(\alpha+\alpha^q)\alpha(c^q+c^{-q})+\alpha^{q+1}(c^q+c^{-q})c^{-1}.
\end{eqnarray*}
We obtain that
\begin{eqnarray*}
  &&\Tr_{n}^{4n}(cv^{q+1})\Tr_{n}^{4n}(c^{q+1})^2\\
  &=&\Tr_{n}^{4n}\left((\alpha+\alpha^q)\alpha(c^q+c^{-q})c\right)  \\
  &=&(\alpha+\alpha^q)\Tr_{n}^{4n}(\alpha(c^q+c^{-q})c)  \\
  &=&(\alpha+\alpha^q)^2\Tr_{n}^{4n}(c^{q+1}),
\end{eqnarray*}
which indicates \eqref{eq4-13}. By the way, we have $\alpha\notin \mathbb{F}_{q}$ due to the assumption $\Tr_{n}^{4n}(cv^{q+1})\ne 0$. Now \eqref{eq4-12} becomes
$$\Omega=\frac{\Tr_{n}^{4n}(c^2b^{q+q^3})+\Tr_{n}^{4n}(c^{q+1})(b^{q+q^3}+b^{1+q^2})}{(\alpha+\alpha^q)^2}.$$
From $b=\lambda z$ and $b^{q}+b^{q^3}=\frac{\alpha}{c+c^{-1}}$, we get $z^q(\lambda^q+\lambda^{-q})=\frac{\alpha}{c+c^{-1}}$, i.e,
$$z^{q}=\frac{\alpha}{(c+c^{-1})(\lambda^q+\lambda^{-q})},$$
then
\begin{equation}\label{eq4-15}
b^q=\frac{\alpha\lambda^q}{(c+c^{-1})(\lambda^q+\lambda^{-q})}
\end{equation}
From
\begin{equation*}
  b^{q+q^3}=z^{2q}=\frac{\alpha^2}{(c+c^{-1})^2(\lambda^q+\lambda^{-q})^2},
\end{equation*}
we see that
$$\Tr_{n}^{4n}(c^2b^{q+q^3})=\Tr_{n}^{4n}\left(\frac{c^2\alpha^2}{(c+c^{-1})^2(\lambda^q+\lambda^{-q})^2}\right)
=\frac{\alpha^2}{(\lambda^q+\lambda^{-q})^2}+\frac{\alpha^{2q}}{(\lambda+\lambda^{-1})^2}$$
and
\begin{eqnarray*}
  &&\Tr_{n}^{4n}(c^{q+1})(b^{q+q^3}+b^{1+q^2}) \\
  &=&\Tr_{n}^{4n}(c^{q+1})\left(\frac{\alpha^2}{(\lambda^{q}+\lambda^{-q})^2(c+c^{-1})^2}
     +\frac{\alpha^{2q}}{(\lambda+\lambda^{-1})^2(c^q+c^{-q})^2}\right)   \\
   &=&\frac{\alpha^2}{\lambda^{2q}+\lambda^{-2q}}\frac{c^q+c^{-q}}{c+c^{-1}}+\frac{\alpha^{2q}}{\lambda^2+\lambda^{-2}}\frac{c+c^{-1}}{c^q+c^{-q}}.
\end{eqnarray*}
From the two equalities we obtain the nominator of $\Omega$ equals
\begin{eqnarray}
   && \frac{\alpha^2}{\lambda^{2q}+\lambda^{-2q}}\left(1+\frac{c^q+c^{-q}}{c+c^{-1}}\right)
    +\frac{\alpha^{2q}}{\lambda^2+\lambda^{-2}}\left(1+\frac{c+c^{-1}}{c^q+c^{-q}}\right) \nonumber\\
   &=& \Tr_{n}^{4n}(c)\left(\frac{\alpha^2}{\lambda^{2q}+\lambda^{-2q}}\frac{1}{c+c^{-1}}
    +\frac{\alpha^{2q}}{\lambda^2+\lambda^{-2}}\frac{1}{c^q+c^{-q}}\right). \label{eq4-16}
\end{eqnarray}
Recall that $v=b^q+b^{q^2}$, then \eqref{eq4-14} and \eqref{eq4-15} give
$$\frac{\alpha\lambda^q}{(c+c^{-1})(\lambda^q+\lambda^{-q})}+\frac{\alpha^q\lambda^{-1}}{(c^q+c^{-q})(\lambda+\lambda^{-1})}
=\frac{\alpha+\alpha^q+\alpha c^q+\alpha^qc^{-1}}{\Tr_{n}^{4n}(c^{q+1})},$$
which is equivalent to
\begin{eqnarray*}
   && \alpha\left(\frac{1+c^q}{\Tr_{n}^{4n}(c^{q+1})}+\frac{\lambda^q}{(c+c^{-1})(\lambda^q+\lambda^{-q})}\right)
   =\alpha^q\left(\frac{1+c^{-1}}{\Tr_{n}^{4n}(c^{q+1})}+\frac{\lambda^{-1}}{(c^q+c^{-q})(\lambda+\lambda^{-1})}\right) \\
   &\Leftrightarrow& \alpha \frac{(1+c^q)(\lambda^q+\lambda^{-q})+\lambda^q(c^q+c^{-q})}{\Tr_{n}^{4n}(c^{q+1})(\lambda^q+\lambda^{-q})}
   =\alpha^q\frac{(1+c^{-1})(\lambda+\lambda^{-1})+\lambda^{-1}(c+c^{-1})}{\Tr_{n}^{4n}(c^{q+1})(\lambda+\lambda^{-1})}.
\end{eqnarray*}
We get
\begin{equation*}
  \alpha^q=\frac{D^q}{D}\left(\frac{\lambda+\lambda^{-1}}{\lambda^q+\lambda^{-q}}\right)\alpha,
\end{equation*}
where $D=\lambda+\lambda^{-1}+\lambda c^{-1}+\lambda^{-1}c$. By plugging the above equality into \eqref{eq4-16}, we simplify the expression of $\Omega$ as
\begin{eqnarray*}
  \Omega &=&\frac{\Tr_{n}^{4n}(c)\frac{\alpha^2}{\lambda^{2q}+\lambda^{-2q}}(\frac{1}{c+c^{-1}}+\frac{D^{2q}}{D^2}\frac{1}{c^q+c^{-q}})}
    {\alpha^2(1+\frac{D^{2q}}{D^2}\frac{\lambda^2+\lambda^{-2}}{\lambda^{2q}+\lambda^{-2q}})}  \\
&=&\frac{\Tr_{n}^{4n}(c)}{\Tr_{n}^{4n}(c^{q+1})}\frac{D^2(c^q+c^{-q})+D^{2q}(c+c^{-1})}{D^{2}(\lambda^q+\lambda^{-q})^2+D^{2q}(\lambda+\lambda^{-1})^2}\\
&=&\frac{\Tr_{n}^{4n}(c)\Tr_{n}^{4n}(c^{-1}\lambda^2)}{\Tr_{n}^{4n}((\lambda^q+\lambda^{-q})\lambda^{-1}c)^2},
\end{eqnarray*}
due to
\begin{eqnarray*}
   && D^2(c^q+c^{-q})+D^{2q}(c+c^{-1})  \\
   &=& (\lambda+\lambda^{-1}+\lambda^{-1}c+\lambda c^{-1})^2(c^q+c^{-q})+(\lambda^q+\lambda^{-q}+\lambda^{-q}c^q+\lambda^{q}c^{-q})^2(c+c^{-1}) \\
   &=& \Tr_{n}^{4n}((c^q+c^{-q})(c^{-2}+1)\lambda^2) \\
   &=&(c+c^{-1})(c^q+c^{-q})\Tr_{n}^{4n}(c^{-1}\lambda^2)
\end{eqnarray*}
and
\begin{eqnarray*}
   &&D(\lambda^q+\lambda^{-q})+D^{q}(\lambda+\lambda^{-1})   \\
   &=&(\lambda+\lambda^{-1}+\lambda^{-1}c+\lambda c^{-1})(\lambda^q+\lambda^{-q})+(\lambda^q+\lambda^{-q}+\lambda^{-q}c^q+\lambda^{q}c^{-q})(\lambda+\lambda^{-1})\\
   &=&(\lambda^{-1}c+\lambda c^{-1})(\lambda^q+\lambda^{-q})+(\lambda^{-q}c^q+\lambda^{q}c^{-q})(\lambda+\lambda^{-1})\\
   &=&\Tr_{n}^{4n}((\lambda^q+\lambda^{-q})\lambda^{-1}c).
\end{eqnarray*}
Let $\beta=(c^{-1}+c^{-q})\lambda^{q+1}+(c+c^{q})\lambda^{-q-1}\in \mathbb{F}_{q^2}$, which does not belong to $\mathbb{F}_{q}$ from
   $$\beta+\beta^q=\Tr_{n}^{4n}((c^{-1}+c^{-q})\lambda^{q+1})=\Tr_{n}^{4n}(c^{-1}\lambda^{q+1}+c^{-1}\lambda^{1-q})
   =\Tr_{n}^{4n}((\lambda^q+\lambda^{-q})\lambda^{-1}c).$$
The fact that
\begin{eqnarray*}
  \beta^{1+q} &=&\left((c^{-1}+c^{-q})\lambda^{q+1}+(c+c^{q})\lambda^{-q-1}\right)\left((c^{-q}+c)\lambda^{q-1}+(c^{-1}+c^{q})\lambda^{-q+1}\right)   \\
   &=& \Tr_{n}^{4n}\left((c^{-1}+c^{-q})(c^{-1}+c^q)\lambda^2\right) \\
   &=& \Tr_{n}^{4n}\left((c^{-2}+1+c^{-q-1}+c^{q-1})\lambda^2\right) \\
   &=& (c+c^{-1}+c^{q}+c^{-q})\Tr_{n}^{4n}(c^{-1}\lambda^2)
\end{eqnarray*}
gives $$\Tr_{1}^{n}(\Omega)=\Tr_{1}^{n}\left(\frac{\beta^{q+1}}{(\beta+\beta^q)^2}\right)=1.$$
\end{proof}
With above propositions, now we can prove the APcN property of the power function as the following Theorem.
\begin{thm}\label{thm2}
Let $n\ge 1$ be a positive integer and $q=2^n$. Then the power permutation $x^{q^3+q^2+q-1}$ over $\mathbb{F}_{q^4}$ is an APcN permutation for each $c\in \mu_{q^2+1}\setminus\{1\}$.
\end{thm}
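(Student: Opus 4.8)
The plan is to fix $a=1$ (legitimate since ${}_c\Delta_F(a,b)={}_c\Delta_F(1,b/a^d)$ for a power map) and count, for each $b\in\mathbb{F}_{q^4}$, the solutions $x$ of
\[
(x+1)^d-c\,x^d=b ,
\]
aiming to show this number never exceeds $2$. The first simplification I would exploit is the identity $x^d=N(x)/x^2$ with $N(x)=x^{1+q+q^2+q^3}\in\mathbb{F}_q$; equivalently, writing the polar decomposition $x=\lambda y$ with $\lambda\in\mu_{q^2+1}$ and $y\in\mathbb{F}_{q^2}^{*}$ one obtains the clean form $x^d=\lambda^{-2}y^{2q}$. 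After separating the two trivial inputs $x=0$ (which forces $b=1$) and $x=1$ (which forces $b=c$, since $x+1=0$ in characteristic two), I would treat the remaining $x$ generically, keeping in mind that Proposition \ref{prop1}(2) pins down exactly when these are the only solutions.

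Next I would write the equation together with its Frobenius conjugates over $\mathbb{F}_{q^4}/\mathbb{F}_q$ and use $c^{q^2}=c^{-1}$, $c^{q^3}=c^{-q}$ to eliminate the norm factors, converting the single $\mathbb{F}_{q^4}$-equation into a system of relative-trace conditions in the data attached to $b$, namely its polar components and $v=b^q+b^{q^2}$. Introducing the scalar $u=k+k^{-1}\in\mathbb{F}_q$ with $k\in\mu_{q+1}$ to parametrise the angular part, solvability becomes equivalent to the existence of a root $u\in\mathbb{F}_q$ of the quartic $G(u)$ of Proposition \ref{prop5} that additionally satisfies the unit-circle criterion of Lemma \ref{lem0}; this is exactly the ``degree four'' reduction.

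The decisive structural input is the factorization of Proposition \ref{prop5},
\[
G(u)=\bigl(u^2+\Tr_n^{4n}(c)\,u+\Tr_n^{4n}(c^{1+q})\bigr)\bigl((1+b^{q+q^3})(1+b^{1+q^2})u^2+Au+B\bigr).
\]
The first factor has roots $c+c^{-1}$ and $c^q+c^{-q}$, a pair of $q$-Frobenius conjugates that, by the standing hypothesis $\Tr_n^{4n}(c)\neq0$, are distinct and therefore lie in $\mathbb{F}_{q^2}\setminus\mathbb{F}_q$; since only $u\in\mathbb{F}_q$ can correspond to an admissible $k\in\mu_{q+1}$, this factor contributes no solutions. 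Hence every candidate $u$ is a root of the second, honestly $\mathbb{F}_q$-rational quadratic, leaving at most two admissible values. Propositions \ref{prop3} and \ref{prop4} then control the non-degeneracy of the coefficients $A=\Tr_n^{4n}(cv^{q+1})$ and $B=\Tr_n^{4n}(c^{1+q}v^{2q})$, ruling out vanishing denominators and simultaneous collapses; Proposition \ref{prop1}(1) supplies the trace value that makes Lemma \ref{lem0} applicable; and Proposition \ref{prop2} shows the coincidence $C_1(k)=C_0(k)=0$ forces $k=1$, which prevents a single admissible root from splitting into two genuine $x$. Assembling these facts bounds the count by $2$ for every $b$, so ${}_c\Delta_F\le2$.

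The step I expect to be the real obstacle is the boundary regime $b^{1+q^2}\neq1$ with $\Tr_n^{4n}(c^{1+q}v^{2q})=0$ but $\Tr_n^{4n}(cv^{q+1})\neq0$, where the quadratic degenerates and one must decide directly whether an extra pair of solutions survives. This is exactly the content of Proposition \ref{prop6}: reducing the relevant quantity to $\Omega=\beta^{q+1}/(\beta+\beta^q)^2$ for a suitable $\beta\in\mathbb{F}_{q^2}\setminus\mathbb{F}_q$ yields $\Tr_1^n(\Omega)=1$, so by Lemma \ref{lem0} the would-be additional solutions fail to materialise and the count stays $\le2$. Finally, to conclude APcN rather than PcN, I would observe that $x\mapsto(x+1)^d-c\,x^d$ is not a bijection (for instance $b=1$ already carries the solution $x=0$, while the quadratic analysis produces a second preimage for suitable $b$), whence some $b$ attains exactly two solutions and ${}_c\Delta_F=2$ for every $c\in\mu_{q^2+1}\setminus\{1\}$.
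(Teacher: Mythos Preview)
Your plan tracks the paper's argument and cites all the right auxiliary results, but the bookkeeping in the degenerate regime is where your sketch diverges from what actually has to be done.

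First, a parametrisation mismatch. The paper does not work with the polar decomposition $x=\lambda y$, $\lambda\in\mu_{q^2+1}$. It passes to the equivalent exponent $(q-1)(q^2+1)+2$ and sets $\alpha^2=(x+1)^{(q-1)(q^2+1)}$, $\beta^2=x^{(q-1)(q^2+1)}$, so that $\alpha,\beta\in\mu_{q+1}$ and $x=(\alpha+b)/(\alpha+c\beta)$. The variable $k$ is $\beta/\alpha\in\mu_{q+1}$; the relation $\alpha\,C_1(k)=C_0(k)$ comes from $G_1(\alpha,\beta)+G_2(\alpha,\beta)=0$, and the quartic $G(u)$ arises from the compatibility condition $C_0(k)^{1+q}=C_1(k)^{1+q}$ encoding $\alpha\in\mu_{q+1}$. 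Your $\lambda,y$ set-up does not obviously produce these objects, so you should either adopt the $(\alpha,\beta)$ framework or say explicitly how your variables map to it.

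Second, and more substantively, your reading of Propositions~\ref{prop2} and~\ref{prop6} misses the two-branch structure of Case~(ii). When $C_1(k)\ne0$ the relation $\alpha=C_0(k)/C_1(k)$ pins $\alpha$ down from $k$; Proposition~\ref{prop2} says $C_0(k)=C_1(k)=0$ forces $k=1$, so for every $k\ne1$ the map $k\mapsto x$ is at most one-to-one. The point is not that $k=1$ is excluded, but that it must be handled \emph{separately}: at $k=1$ (i.e.\ $\alpha=\beta$) one reverts to $G_1(\alpha,\alpha)=0$, the quadratic \eqref{eq4-24} in $\alpha$, which may itself contribute $0$ or $2$ solution pairs. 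In Case~(ii) (where $B=\Tr_n^{4n}(c^{1+q}v^{2q})=0$) the second factor of $G(u)$ has the root $u=0$ (feeding the $k=1$ branch via \eqref{eq4-24}) \emph{and} possibly a nonzero root $u=A/\bigl((1+b^{q+q^3})(1+b^{1+q^2})\bigr)$ (feeding two values $k\ne1$). Proposition~\ref{prop6} does not make one of these branches vanish; it shows that
\[
\Tr_1^n\!\left(\frac{A_0A_2}{A_1^{2}}\right)+\Tr_1^n\!\left(\frac{(1+b^{q+q^3})(1+b^{1+q^2})}{\Tr_n^{4n}(cv^{q+1})}\right)=\Tr_1^n(\Omega)=1,
\]
so via Lemma~\ref{lem0} \emph{exactly one} of the two branches can supply two solutions and the total stays $\le2$. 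Your phrase ``the would-be additional solutions fail to materialise'' suggests one side is forced to zero, which is not what is proved; without this additivity argument the count could a~priori be $2+2=4$.
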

\begin{proof}
It is easy to verify that ${\rm gcd}(q^3+q^2+q-1,q^4-1)=1$. Since $(q^3+q^2+q-1)q^2\equiv {(q-1)(q^2+1)+2} \,{\rm mod}\, (q^4-1)$, it suffices to prove that for any $b\in \mathbb{F}_{q^4}$, the following $c$-differential equation
\begin{equation}\label{eq4-18}
(x+1)^{(q-1)(q^2+1)+2}+c^2x^{(q-1)(q^2+1)+2}=b^2
\end{equation}
has at most two solutions in $\mathbb{F}_{q^4}$. Note that $x=0$($x=1$) is a solution of \eqref{eq4-18} if and only if $b=1$($b=c$), we only need consider solutions  in $\mathbb{F}_{q^4}\setminus\{0,1\}$. Let $(x+1)^{(q-1)(q^2+1)}=\alpha^2$ and $x^{(q-1)(q^2+1)}=\beta^2$, then
\eqref{eq4-18} gives that $\alpha^2(x^2+1)+c^2\beta^2x^2=b^2$, i.e
$$x=\frac{\alpha+b}{\alpha+c\beta}.$$
We obtain the following equation system
\begin{eqnarray}
  \left(\frac{b+c\beta}{\alpha+c\beta}\right)^{(q-1)(q^2+1)} &=&\alpha^2  \label{eq4-19}\\
  \left(\frac{\alpha+b}{\alpha+c\beta}\right)^{(q-1)(q^2+1)} &=&\beta^2  \label{eq4-20}
\end{eqnarray}
with $(\alpha,\beta)\in \mu_{q+1}\times\mu_{q+1}$. It is easy to see that there exists an one-to-one correspondence between the solutions of $x$ satisfying \eqref{eq4-18} and the solution pairs $(\alpha,\beta)$ of the above system. Since the left hand side of \eqref{eq4-19} equals
$$
    \left(\frac{b^{q^2}+c^{q^2}\beta^{q^2}}{\alpha^{q^2}+c^{q^2}\beta^{q^2}}\cdot\frac{b+c\beta}{\alpha+c\beta}\right)^{q-1}
   = \frac{\frac{b^{q^3}+c^{q^3}\beta^{q^3}}{\alpha^{q^3}+c^{q^3}\beta^{q^3}}\cdot\frac{b^q+c^q\beta^q}{\alpha^q+c^q\beta^q}}
   {\frac{b^{q^2}+c^{q^2}\beta^{q^2}}{\alpha^{q^2}+c^{q^2}\beta^{q^2}}\cdot\frac{b+c\beta}{\alpha+c\beta}}
   = \frac{\frac{b^{q^3}+c^{-q}\beta^{-1}}{\alpha^{-1}+c^{-q}\beta^{-1}}\cdot\frac{b^q+c^q\beta^{-1}}{\alpha^{-1}+c^q\beta^{-1}}}
   {\frac{b^{q^2}+c^{-1}\beta}{\alpha+c^{-1}\beta}\cdot\frac{b+c\beta}{\alpha+c\beta}},
$$
we rewrite \eqref{eq4-19} as
$$\frac{b^{q+q^3}+\beta^{-2}+(c^{-q}b^q+c^qb^{q^3})\beta^{-1}}{\alpha^{-2}+\beta^{-2}+\alpha^{-1}\beta^{-1}(c^q+c^{-q})}
  =\alpha^2\cdot\frac{b^{1+q^2}+\beta^2+\beta(c^{-1}b+cb^{q^2})}{\alpha^2+\beta^2+\alpha\beta(c+c^{-1})}.$$
By expanding the above equality, we obtain our first relation on variables $\alpha,\beta$. Since this relation seems a bit long, for convenience we write the terms $\alpha^i\beta^j$ and the corresponding coefficients as the following:
$$
\begin{array}{lcl}
\hline
\alpha^2&  1+b^{q+q^3} &  \alpha^2\beta^2  \\
\alpha&    (c+c^{-1})(c^{-q}b^q+c^qb^{q^3})+(c^q+c^{-q})(c^{-1}b+cb^{q^2})&  \alpha\beta^2 \\
\alpha^2\beta^{-2}& 1+b^{1+q^2}& \alpha^2 \\
\alpha^2\beta^{-1}& c^{-q}b^q+c^qb^{q^3}+c^{-1}b+cb^{q^2} &  \alpha^2\beta \\
\alpha\beta^{-1}& b^{1+q^2}(c^q+c^{-q})+c+c^{-1}&  \alpha\beta \\
\alpha\beta &  c^q+c^{-q}+(c+c^{-1})b^{q+q^3}&  \alpha\beta^3  \\
\beta^2&  1+b^{q+q^3} &  \beta^4 \\
\beta& c^{-q}b^q+c^qb^{q^3}+c^{-1}b+cb^{q^2}  &  \beta^3  \\
1&          1+b^{1+q^2}&    \beta^2 \\
\hline
\end{array}
$$
in which the first column are these original terms in the expanding expressions and the third column is obtained from the first column by multiplying $\beta^2$. We denote $G_{1}(\alpha,\beta)=0$, which is constituted by the terms in the third column and their corresponding coefficients. By completely similar computations, we transform \eqref{eq4-20} as
$$\frac{\frac{b^{q^3}+\alpha^{-1}}{\alpha^{-1}+c^{-q}\beta^{-1}}\cdot\frac{b^q+\alpha^{-1}}{\alpha^{-1}+c^q\beta^{-1}}}
{\frac{b^{q^2}+\alpha}{\alpha+c^{-1}\beta}\cdot\frac{b+\alpha}{\alpha+c\beta}}=\beta^2,$$
which is equivalent to
$$\frac{b^{q+q^3}+\alpha^{-2}+\alpha^{-1}(b^q+b^{q^3})}{\alpha^{-2}+\beta^{-2}+\alpha^{-1}\beta^{-1}(c^q+c^{-q})}
  =\beta^2\frac{b^{1+q^2}+\alpha^2+\alpha(b+b^{q^2})}{\alpha^2+\beta^2+\alpha\beta(c+c^{-1})},$$
then we obtain our second relation $G_{2}(\alpha,\beta)=0$, according to the terms in the third column and the coefficients according to the following table
$$
\begin{array}{lcl}
\hline
\alpha^2&  1+b^{q+q^3} &  \alpha^4   \\
\alpha&    b+b^q+b^{q^2}+b^{q^3}&  \alpha^3 \\
\beta^2& 1+b^{q+q^3}& \alpha^2\beta^2  \\
\beta & (c+c^{-1})(b^q+b^{q^3})+(c^q+c^{-q})(b+b^{q^2}) &  \alpha^2\beta \\
\alpha\beta& c^q+c^{-q}+(c+c^{-1})b^{q+q^3}&  \alpha^3\beta \\
\alpha^{-2}\beta^2 &  1+b^{1+q^2}&  \beta^2 \\
\alpha^{-1}\beta&  b^{1+q^2}(c^q+c^{-q})+c+c^{-1} &  \alpha\beta \\
\alpha^{-1}\beta^2& b+b^q+b^{q^2}+b^{q^3}  &  \alpha\beta^2 \\
1&          1+b^{1+q^2}&    \alpha^2 \\
\hline
\end{array}
$$
From the fact
\begin{eqnarray*}
   && (c+c^{-1})(b^q+b^{q^3})+(c^q+c^{-q})(b+b^{q^2})+c^{-q}b^q+c^qb^{q^3}+c^{-1}b+cb^{q^2}  \\
   &=& c(b^q+b^{q^3}+b^{q^2})+c^{-1}(b^q+b^{q^3}+b)+c^{q}(b+b^{q^2}+b^{q^3})+c^{-q}(b+b^{q^2}+b^q) \\
   &=& (c+c^{-1}+c^q+c^{-q})(b+b^q+b^{q^2}+b^{q^3})+cb+c^qb^q+c^{-1}b^{q^2}+c^{-q}b^{q^3} \\
   &=& \Tr_{n}^{4n}(c)\Tr_{n}^{4n}(b)+\Tr_{n}^{4n}(cb),
\end{eqnarray*}
we write the expression $G_{1}(\alpha,\beta)+G_{2}(\alpha,\beta)$ as $H(\alpha,\beta)$, which is described as follows:
$$
\begin{array}{cll}
\hline
   1+b^{q+q^3}&  \alpha^4 & \alpha^4\\
 \Tr_{n}^{4n}(b) &  \alpha^3 &\alpha^3\\
  \Tr_{n}^{4n}(bc)+\Tr_{n}^{4n}(b)\Tr_{n}^{4n}(c)&  \alpha^2\beta &k\alpha^3\\
  \Tr_{n}^{4n}(b)+\Tr_{n}^{4n}(c^{q+1}(b^{q^2}+b^{q^3})) &  \alpha\beta^2 &k^2\alpha^3\\
 c^q+c^{-q}+(c+c^{-1})b^{q+q^3}  &  \alpha\beta^3 &k^3\alpha^4\\
 c^q+c^{-q}+(c+c^{-1})b^{q+q^3}&    \alpha^3\beta &k\alpha^4\\
  1+b^{q+q^3}     &    \beta^4  & k^4\alpha^4 \\
   \Tr_{n}^{4n}(cb^{q^2})    &     \beta^3  & k^3\alpha^3\\
\hline
\end{array}
$$
 Note that terms in the third column in the above table are obtained by substituting $\beta=k\alpha$ from the second column, which will be used in later discussions.
 If $b=1$, then $G_{2}(\alpha,\beta)=0$ indicates that $\alpha^3\beta+\alpha\beta=0$ and then $\alpha=1$, which contradicts with \eqref{eq4-20}. Similarly, if $b=c$, then $1=b^{1+q^2}$, $G_{1}(\alpha,\beta)=0$ gives $\alpha\beta+\alpha\beta^3=0$ and then $\beta=1$, which contradicts with \eqref{eq4-19}. That is to say, \eqref{eq4-18} has $x=0$($x=1$) as its unique solution for $b=1$($b=c$). In the next we assume that $b\notin \{1,c\}$.

By substituting $\beta=k\alpha$ with $k\in \mu_{q+1}$ and eliminating $\alpha^3$, we transform the equality $H(\alpha,\beta)=0$ as
\begin{equation}\label{eq4-21}
\alpha C_{1}(k)=C_{0}(k),
\end{equation}
where
$C_{0}(k)$ and $C_{1}(k)$ are exactly defined as in \eqref{eq4-1}.
We see from \eqref{eq4-21} that if $C_{1}(k)\ne0$, then $\alpha$ is uniquely determined by $k$, which means that, to determine the number of solutions of \eqref{eq4-18}, it suffices to determine the number of suitable $k$'s. If $C_{1}(k)=0$ and $C_{0}(k)\ne 0$, no such $\alpha\in \mu_{q+1}$ exists satisfying \eqref{eq4-18}. By Proposition 2, $k=1$ is the only possible element in $\mu_{q+1}$ such that $C_{0}(k)=C_{1}(k)=0$, which means that
\begin{eqnarray}
   && \Tr_{n}^{4n}(bc)+\Tr_{n}^{4n}(b)\Tr_{n}^{4n}(c)+\Tr_{n}^{4n}\left(c^{1+q}(b^{q^2}+b^{q^3})\right)+\Tr_{n}^{4n}(cb^{q^2})  \nonumber\\
   &=& \Tr_{n}^{4n}\left(c(b^q+b^{q^3})\right)+\Tr_{n}^{4n}\left(c^{1+q}(b^{q^2}+b^{q^3})\right) =0 \nonumber\\
   &\Leftrightarrow& \Tr_{n}^{4n}(c(v+v^q))+\Tr_{n}^{4n}(c^{1+q}v^q)=0,\label{eq4-22}
\end{eqnarray}
we should take into account the solution pairs $(\alpha,\alpha)$ satisfying \eqref{eq4-19} and \eqref{eq4-20}.
 Assume that $k\ne 1$ and $C_{1}(k)\ne0$. The condition $\alpha\in \mu_{q+1}$ gives that
$$C_{1}(k)^{1+q}=C_{0}(k)^{1+q}.$$
 Let $u=k+k^{-1}$, then $u^q=u$ and we will write the above equality as a polynomial on $u$. By the fact $k\in \mu_{q+1}$ and
\begin{eqnarray*}
   C_{1}(k)&=&k^2\left((k^2+k^{-2})(1+b^{q+q^3})+(k+k^{-1})(c^q+c^{-q}+(c+c^{-1})b^{q+q^3})\right)  \\
   &=&k^2u\left(u(1+b^{q+q^3})+c^q+c^{-q}+(c+c^{-1})b^{q+q^3}\right),
\end{eqnarray*}
we have
\begin{eqnarray*}
   C_{1}(k)^{1+q}&=&u^2\left(u(1+b^{q+q^3})+c^q+c^{-q}+(c+c^{-1})b^{q+q^3}\right)\\
         &&\cdot\left(u(1+b^{q^2+1})+c+c^{-1}+(c^q+c^{-q})b^{q^2+1}\right)  \\
   &=&u^4(1+b^{q+q^3})(1+b^{q^2+1})+u^3\left(\Tr_{n}^{4n}(c)(1+b^{1+q+q^2+q^3})\right)  \\
   &&+u^2\left(\Tr_{n}^{4n}(c^{q+1})(1+b^{1+q+q^2+q^3})+\Tr_{n}^{4n}(c^2b^{q+q^3})\right),
\end{eqnarray*}
due to
\begin{eqnarray}
   && (c+c^{-1}+(c^q+c^{-q})b^{q^2+1})(c^q+c^{-q}+(c+c^{-1})b^{q+q^3}) \nonumber\\
   &=& (c+c^{-1})(c^q+c^{-q})(1+b^{1+q+q^2+q^3})+(c+c^{-1})^2b^{q+q^3}+(c^q+c^{-q})^2b^{1+q^2}\nonumber\\
   &=& \Tr_{n}^{4n}(c^{q+1})(1+b^{1+q+q^2+q^3})+\Tr_{n}^{4n}(c^2b^{q+q^3}) \label{neweq2}
\end{eqnarray}
and
\begin{eqnarray*}
   &&(1+b^{q+q^3})(c+c^{-1}+(c^q+c^{-q})b^{q^2+1})+(1+b^{q^2+1})(c^q+c^{-q}+(c+c^{-1})b^{q+q^3})  \\
   &=&(1+b^{q+q^3})(c+c^{-1})+(1+b^{q+q^3})(c^q+c^{-q})b^{q^2+1}\\
      &&+(1+b^{q^2+1})(c^q+c^{-q})+(1+b^{1+q^2})(c+c^{-1})b^{q+q^3}  \\
   &=&c+c^{-1}+c^q+c^{-q}+(c+c^{-1})b^{q+q^3+1+q^2}+(c^q+c^{-q})b^{1+q^2+q+q^3}  \\
   &=& \Tr_{n}^{4n}(c)(1+b^{1+q+q^2+q^3}).
\end{eqnarray*}
The transformation of $C_{0}(k)^{q+1}$ into a polynomial on $u$ is a bit more complicated but direct. Denote
$$C_{0}(k)=E_{0}+E_{1}k+E_{2}k^2+E_{3}k^3,$$
where
$$
\begin{array}{cl}
  E_{0}=&\Tr_{n}^{4n}(b) \\
  E_{1}=& \Tr_{n}^{4n}(bc)+\Tr_{n}^{4n}(b)\Tr_{n}^{4n}(c)\\
  E_{2}=&\Tr_{n}^{4n}(b)+\Tr_{n}^{4n}(c^{1+q}(b^{q^2}+b^{q^3})) \\
  E_{3}=&\Tr_{n}^{4n}(cb^{q^2}).
\end{array}
$$
Note that all $E_{i}$'s are elements in $\mathbb{F}_{q}$, we have
\begin{eqnarray*}
  C_{0}(k)^{1+q} &=&(E_{0}+E_{1}k+E_{2}k^2+E_{3}k^3)(E_{0}+E_{1}k^{-1}+E_{2}k^{-2}+E_{3}k^{-3})  \\
   &=& E_{0}^2+E_{1}^2+E_{2}^2+E_{3}^2+(k+k^{-1})(E_{0}E_{1}+E_{1}E_{2}+E_{2}E_{3})\\
   &&+(k^2+k^{-2})(E_{0}E_{2}+E_{1}E_{3})
      +(k^3+k^{-3})E_{0}E_{3}\\
   &=& E_{0}^2+E_{1}^2+E_{2}^2+E_{3}^2+u(E_{0}E_{1}+E_{1}E_{2}+E_{2}E_{3}+E_{0}E_{3})\\
   &&+u^2(E_{0}E_{2}+E_{1}E_{3})+u^3E_{0}E_{3}
\end{eqnarray*}
because of $k^3+k^{-3}=u^3+u$. Then the task becomes the computations of all the coefficients.
Firstly we have
\begin{eqnarray*}
   &&E_{0}+E_{1}+E_{2}+E_{3}  \\
   &=& \Tr_{n}^{4n}(bc)+\Tr_{n}^{4n}(b)\Tr_{n}^{4n}(c)+\Tr_{n}^{4n}(c^{1+q}(b^{q^2}+b^{q^3}))+\Tr_{n}^{4n}(cb^{q^2}) \\
   &=& \Tr_{n}^{4n}(c(b^{q}+b^{q^3}))+\Tr_{n}^{4n}(c^{1+q}(b^{q^2}+b^{q^3})),
\end{eqnarray*}
\begin{eqnarray*}
   && E_{0}E_{1}+E_{1}E_{2}+E_{2}E_{3}+E_{0}E_{3}=(E_{1}+E_{3})(E_{0}+E_{2})  \\
   &=& \left(\Tr_{n}^{4n}(bc)+\Tr_{n}^{4n}(b)\Tr_{n}^{4n}(c)+\Tr_{n}^{4n}(cb^{q^2})\right)\Tr_{n}^{4n}\left(c^{1+q}(b^{q^2}+b^{q^3})\right)\\
   &=&\Tr_{n}^{4n}\left(c(b^q+b^{q^3})\right)\Tr_{n}^{4n}\left(c^{1+q}(b^{q^2}+b^{q^3})\right).
\end{eqnarray*}
The computations
\begin{eqnarray*}
   E_{1}E_{3}&=&\left(\Tr_{n}^{4n}(bc)+\Tr_{n}^{4n}(b)\Tr_{n}^{4n}(c)\right)\Tr_{n}^{4n}(cb^{q^2})   \\
   &=&\Tr_{n}^{4n}\left(cb^{q^2}(bc+b^qc^q+b^{q^2}c^{q^2}+b^{q^3}c^{q^3})\right)
     + \Tr_{n}^{4n}(b)\Tr_{n}^{4n}\left(cb^{q^2}(c+c^q+c^{q^2}+c^{q^3})\right) \\
   &=&\Tr_{n}^{4n}(c^2b^{1+q^2}+c^{1+q}b^{q+q^2}+b^{2q^2}+c^{1+q^3}b^{q^2+q^3}) \\
     &&+\Tr_{n}^{4n}(b)\Tr_{n}^{4n}\left((c^2+c^{1+q}+1+c^{1+q^3})b^{q^2}\right)\\
     &=&\Tr_{n}^{4n}(c^2b^{1+q^2}+c^{1+q}b^{q+q^2}+c^{1+q^3}b^{q^2+q^3})
     +\Tr_{n}^{4n}(b)\Tr_{n}^{4n}\left((c^2+c^{1+q}+c^{1+q^3})b^{q^2}\right)\\
     &=&\Tr_{n}^{4n}(c^2b^{1+q^2})+\Tr_{n}^{4n}\left(c^{1+q}(b^{q+q^2}+b^{1+q^3})\right)
       +\Tr_{n}^{4n}(b)\Tr_{n}^{4n}(c^2b^{q^2})\\
       &&+\Tr_{n}^{4n}(b)\Tr_{n}^{4n}(c^{1+q}(b^{q^2}+b^{q^3}))
\end{eqnarray*}
and
\begin{eqnarray*}
  E_{0}E_{2}&=&\Tr_{n}^{4n}(b)\left(\Tr_{n}^{4n}(b)+\Tr_{n}^{4n}(c^{1+q}(b^{q^2}+b^{q^3}))\right)  \\
\end{eqnarray*}
give $$E_{0}E_{2}+E_{1}E_{3}=\Tr_{n}^{4n}(b^2)+\Tr_{n}^{4n}\left(c^2(b^{2q^2}+b^{q+q^2}+b^{q^2+q^3})\right)
       +\Tr_{n}^{4n}\left(c^{1+q}(b^{q+q^2}+b^{1+q^3})\right).$$
Together with $E_{0}E_{3}=\Tr_{n}^{4n}(b)\Tr_{n}^{4n}(cb^{q^2})$, we obtain an equality $G(u)=0$ from $C_{0}(k)^{1+q}=C_{1}(k)^{1+q}$, which is exactly defined as Proposition \ref{prop5}, and can be factorized as two quadratic polynomials.

Observe that $c+c^{-1}$ and $c^q+c^{-q}$ are exactly the two solutions of the equation
$$u^2+\Tr_{n}^{4n}(c)u+\Tr_{n}^{4n}(c^{1+q})=0$$
in $\mathbb{F}_{q^2}\setminus\mathbb{F}_{q}$ due to Proposition \ref{prop1}. Then $G(u)=0$ for some $u\in \mathbb{F}_{q}$ if and only if
\begin{equation}\label{eq4-23}
(1+b^{q+q^3})(1+b^{1+q^2})u^2+\Tr_{n}^{4n}(cv^{q+1})u+\Tr_{n}^{4n}\left(c^{1+q}v^{2q}\right)=0.
\end{equation}

Case (i): $\Tr_{n}^{4n}\left(c^{1+q}v^{2q}\right)\ne0$. By Proposition \ref{prop1}, we have $\Tr_{n}^{4n}(cv^{q+1})\ne0$. If $b^{1+q^2}=1$, then $u=\frac{\Tr_{n}^{4n}(c^{1+q}v^q)}{\Tr_{n}^{4n}(cv^{q+1})}$ is the unique solution of \eqref{eq4-23}. If $b^{1+q^2}\ne 1$, we
assume that $u_{0}$ and $u_{1}$ are two non-zero solutions in $\mathbb{F}_{q}$ of \eqref{eq4-23}, then  $u_{0}+u_{1}=\frac{\Tr_{n}^{4n}\left(cv^{q+1}\right)}{(1+b^{q+q^3})(1+b^{1+q^2})}$, $u_{0}u_{1}=\frac{\Tr_{n}^{4n}\left(c^{1+q}v^{2q}\right)}{(1+b^{q+q^3})(1+b^{1+q^2})}$ and
$$\Tr_{1}^{n}(u_{0}^{-1}+u_{1}^{-1})
  =\Tr_{1}^{n}\left(\frac{\Tr_{n}^{4n}(cv^{q+1})}{\Tr_{n}^{4n}(c^{1+q}v^{2q})}\right)=1
$$
from Proposition \ref{prop1}. From Lemma \ref{lem0}, we know that $k+k^{-1}=u$, i.e $k^2+ku+1=0$ has two solutions in $\mu_{q+1}$ if and only if $\Tr_{1}^{n}(u^{-1})=1$. Then there exists at most one $u\in \mathbb{F}_{q}$ contributing two $k$'s such that $k+k^{-1}=u$ and $k^{1+q}=1$.

Case (ii): $\Tr_{n}^{4n}\left(c^{1+q}v^{2q}\right)=0$. Recall that $\Tr_{n}^{4n}(c^{1+q}v^{2q})=0$ is equivalent to the assumption \eqref{eq4-22}. Then there are two kinds of solution pairs for equations system \eqref{eq4-19} and \eqref{eq4-20}. One is $(\alpha,k\alpha)$ with $\alpha$ determined by \eqref{eq4-19}, and $k\in \mu_{q+1}\setminus\{1\}$ satisfying $k+k^{-1}=u\ne 0$, which comes from \eqref{eq4-23}. The other is $(\alpha,\alpha)$ for some $\alpha\in \mu_{q+1}$. Note that $G_{1}(\alpha,\alpha)=0$ gives
\begin{equation}\label{eq4-24}
A_{2}\alpha^2+A_{1}\alpha+A_{0}=0,
\end{equation}
where
\begin{eqnarray*}
  A_{2} &=&c^q+c^{-q}+(c+c^{-1})b^{q+q^3},  \\
  A_{1} &=& (c+c^{-1})(c^{-q}b^q+c^qb^{q^3})+(c^q+c^{-q})(c^{-1}b+cb^{q^2}),\\
  A_{0} &=&c+c^{-1}+(c^q+c^{-q})b^{1+q^2}.
\end{eqnarray*}
We see that $A_{2}=A_{0}^q\in \mathbb{F}_{q^2}$ and
\begin{eqnarray*}
  A_{1} &=&   \Tr_{n}^{4n}\left((c+c^{-1})c^{-q}b^q\right) \\
   &=& \Tr_{n}^{4n}\left(c^{q+1}(b^{q^2}+b^{q^3})\right)= \Tr_{n}^{4n}\left(c(b^q+b^{q^3})\right) \\
   &=& \Tr_{n}^{4n}\left(c(v+v^q)\right),
\end{eqnarray*}
which are defined as in Proposition \ref{prop3}. By Proposition \ref{prop2}, $A_{1}$ and $A_{2}$ are not zero at the same time. If $b^{1+q^2}=1$, then $\Tr_{n}^{4n}(cv^{q+1})\ne 0$ due to the assumption $b\notin \{1,c\}$ from Proposition \ref{prop1}. We also have $A_{1}=\Tr_{n}^{4n}(c(v+v^{q}))\ne 0$ from Proposition \ref{prop4}. Together with the fact $A_{2}=\Tr_{n}^{4n}(c)\ne 0$, we see that \eqref{eq4-24} has either $0$ or $2$ solutions and \eqref{eq4-23} does not give new $k$'s since $u=0$ is its unique solution.

Assume $b^{1+q^2}\ne 1$.
 If $$A_{1}=\Tr_{n}^{4n}\left(c(v+v^q)\right)=\Tr_{n}^{4n}\left(c^{q+1}v^q\right)=0,$$
then
$$\Tr_{n}^{4n}\left(cv^{q+1}\right)=\Tr_{n}^{4n}\left(c^{q+1}v^{2q}\right)=0,$$
which means that $(\alpha,\alpha)$ with $\alpha^2=A_{2}^{q-1}$ is the unique solution pair of \eqref{eq4-19} and \eqref{eq4-20}, and then \eqref{eq4-18} has exactly one solution, due to \eqref{eq4-23} does not give new $k$'s.  Assume $A_{2}A_{1}\ne0$. By the fact $\frac{A_{0}}{A_{2}}=\frac{\frac{A_{1}}{A_{2}}}{\frac{A_{1}^q}{A_{2}^q}}$ always holds, then \eqref{eq4-24} has two solutions in $\mu_{q+1}$ if and only if
$$\Tr_{1}^{n}\left(\frac{A_{0}A_{2}}{A_{1}^2}\right)=1.$$
Observe that
$$A_{0}A_{2}=\Tr_{n}^{4n}(c^{q+1})(1+b^{1+q+q^2+q^3})+\Tr_{n}^{4n}(c^2b^{q+q^3})$$
from \eqref{neweq2}. If $\Tr_{n}^{4n}(cv^{1+q})\ne 0$ and $\Tr_{n}^{4n}(c^{1+q}v^{2q})=0$, then \eqref{eq4-23} contributes two $k$'s in $\mu_{q+1}\setminus\{1\}$ if
$$\Tr_{1}^{n}\left(\frac{(1+b^{q+q^3})(1+b^{1+q^2})}{\Tr_{n}^{4n}(cv^{q+1})}\right)=1.$$
 Then by Proposition \ref{prop6}, we see that the two cases that \eqref{eq4-23} gives two $k$'s and \eqref{eq4-24} gives two $\alpha$'s do not happen at the same time,  which completes the proof.
\end{proof}

So far, there are only a few known APcN functions, especially for finite fields with even characteristic. We list these functions in the following Table \ref{tab1}.
\begin{table}[!htb]
\caption{The known APcN functions over finite fields}
\label{tab1}
\centering
\begin{tabular}{cccc}
\hline
$p$ & $F(x)$ & $_c\Delta_F$ & Ref.
\\

\hline
any & $x^{q-2}$ & $2$ &  \cite{Ellingsen-it} \\
odd & $x^{p^k+1}$ & $2$ &  \cite{Ellingsen-it,hasan,yanhaode}\\
$3$ & $x^{\frac{3^k+1}{2}}$ & $2$ &  \cite{yanhaode}\\
$2$ & $b\phi(x)+\Tr_{q}^{q^n}(g(x^q-x)),b\phi(x)+g(x^q-x)^{\frac{q^n-1}{q-1}}$ & $2$ & \cite{bartoli} \\
any & $x(\sum_{i=1}^{l-1}x^{i\frac{q-1}{l}}+u)$&  $\le 2$ & \cite{wuyanan}                                                \\
$3$ & $(x^{p^k}-x)^{\frac{q-1}{2}+p^{ik}}+a_{1}x+a_{2}x^{p^k}+a_{3}x^{p^k}$& $\le 2$ &\cite{wuyanan}\\
$2$ & $x^{p^k+1}+\gamma\Tr_{1}^{q}(x)$& $\le 2$ &\cite{wuyanan}\\
any & $x^{q+1}+a_{0}x^q+a_{1}x$ & $2$ & \cite{wuyanan}  \\
$2$ & $u\phi(x)+\sum_{i=1}^{t}g(x^q-x)^{\frac{q^n-1}{d_{i}}}$& $2$ & \cite{wuyanan}\\
$2$ & $x^{q^3+q^2+q-1}$& $2$ & this paper \\
\hline
\end{tabular}
\end{table}
It is important to investigate whether the APcN monomial in this paper is new. From \cite{hasan}, we know that the c-differential uniformity is only preserved by affine transform, while it is not invariant under EA-equivalence and CCZ-equivalence. From Table \ref{tab1}, we see that for finite fields with even characteristic, there are six classes of APcN functions. For the function $f=x(\sum_{i=1}^{l-1}x^{i\frac{q-1}{l}}+u)$, which is obtained by cyclotomic method in \cite{wuyanan}, the experiments show that many of such $f$'s are not permutations, and when $f$ is a permutation, it is not EA-equivalent with the APcN power functions here. For other cases, by comparing the algebraic degrees and the permutation properties, it is easy to obtain that the permutation monomial in this paper is not EA-equivalent to known APcN functions.

\begin{prop}\label{prop7}
For any $b\in \mathbb{F}_{q^4}$, the $c$-differential equation \eqref{eq4-18} has exactly one solution in $\mathbb{F}_{q^4}$, if and only if
$\Tr_{n}^{4n}(c^{q+1}v^{2q})=0$ and $\Tr_{n}^{4n}(c(v+v^q))=0$.
\end{prop}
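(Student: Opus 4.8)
The plan is to count, for each fixed $b$, the solutions of \eqref{eq4-18} by returning to the parametrization from the proof of Theorem \ref{thm2}: every solution $x\notin\{0,1\}$ corresponds to a pair $(\alpha,\beta)\in\mu_{q+1}\times\mu_{q+1}$ solving the system \eqref{eq4-19}--\eqref{eq4-20}, and I split these pairs into the \emph{off-diagonal} ones ($\beta=k\alpha$ with $k\in\mu_{q+1}\setminus\{1\}$) and the \emph{diagonal} ones ($\beta=\alpha$). Since Theorem \ref{thm2} already guarantees that the total number of solutions is at most $2$, it suffices to prove that this count is \emph{odd} exactly when $\Tr_{n}^{4n}(c^{q+1}v^{2q})=0$ and $\Tr_{n}^{4n}(c(v+v^q))=0$ both hold; an odd count bounded by $2$ is then forced to equal $1$. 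The two boundary values $b=1$ and $b=c$, whose unique solutions $x=0$ and $x=1$ lie outside the $(\alpha,\beta)$-parametrization, I would dispose of first by hand: for $b=1$ one has $v=0$, so both trace conditions hold trivially, while for $b=c$ one checks $\Tr_{n}^{4n}(c(v+v^q))=\Tr_{n}^{4n}(c^{q+1}v^q)=0$ directly from $c^{q^2}=c^{-1}$ and the Frobenius-invariance of the trace, whence Proposition \ref{prop4} yields $\Tr_{n}^{4n}(c^{q+1}v^{2q})=0$ as well.

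First I would show the off-diagonal count is always even. Substituting $\beta=k\alpha$ into $H(\alpha,\beta)=0$ and eliminating $\alpha^3$ produces \eqref{eq4-21}, so for $C_1(k)\neq0$ the value $\alpha=C_0(k)/C_1(k)$ is determined, and $\alpha\in\mu_{q+1}$ is equivalent to $G(u)=0$ with $u=k+k^{-1}$. The elements $k$ and $k^{-1}$ share the same $u$, hence succeed or fail together, and by Lemma \ref{lem0} a fixed admissible $u$ contributes either two elements of $\mu_{q+1}$ (when $\Tr_{1}^{n}(u^{-1})=1$) or none; the possibility $C_1(k)=0$ produces no admissible $\alpha$ for $k\neq1$ by Proposition \ref{prop2}. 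Thus off-diagonal solutions occur in $k\leftrightarrow k^{-1}$ pairs and their number is even.

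Next I would analyze the diagonal. A pair $(\alpha,\alpha)$ solves the system iff $G_1(\alpha,\alpha)=G_2(\alpha,\alpha)=0$, and one computes $G_1(\alpha,\alpha)=\alpha^2\bigl(A_2\alpha^2+\Tr_{n}^{4n}(c^{q+1}v^q)\,\alpha+A_0\bigr)$ and $G_2(\alpha,\alpha)=\alpha^2\bigl(A_2\alpha^2+\Tr_{n}^{4n}(c(v+v^q))\,\alpha+A_0\bigr)$. Subtracting, compatibility requires $C_0(1)=\Tr_{n}^{4n}(c(v+v^q))+\Tr_{n}^{4n}(c^{q+1}v^q)=0$, i.e.\ $\Tr_{n}^{4n}(c^{q+1}v^{2q})=0$; so diagonal solutions exist only under the first condition, and then both equations collapse to \eqref{eq4-24}, namely $A_2\alpha^2+A_1\alpha+A_0=0$. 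The decisive structural fact here is $A_2=A_0^{q}$ with $A_1\in\mathbb{F}_q$, which gives $A_0/A_2=(A_1/A_2)^{1-q}$; by Lemma \ref{lem0}(1) this puts \eqref{eq4-24} in the ``both-or-neither'' regime, so when $A_1\neq0$ it has $0$ or $2$ roots in $\mu_{q+1}$, never exactly one. When $A_1=\Tr_{n}^{4n}(c(v+v^q))=0$, Proposition \ref{prop3} forces $A_2\neq0$, the equation degenerates to $A_2\alpha^2=A_0$, and as squaring is a bijection in characteristic $2$ it has a single root; this root lies in $\mu_{q+1}$ because $(\alpha^{q+1})^2=A_0^{1-q}A_0^{q-q^2}=A_0^{1-q^2}=1$.

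Assembling the parities settles both directions. Forward: an odd total forces the even off-diagonal part to be accompanied by an odd diagonal part, which by the previous paragraph requires $\Tr_{n}^{4n}(c^{q+1}v^{2q})=0$ (diagonal nonempty) and $A_1=\Tr_{n}^{4n}(c(v+v^q))=0$ (diagonal count $1$ rather than $0$ or $2$). Conversely, under both conditions the diagonal contributes exactly one solution, while Proposition \ref{prop4} gives $\Tr_{n}^{4n}(cv^{q+1})=0$, so \eqref{eq4-23} reduces to $(1+b^{q+q^3})(1+b^{1+q^2})u^2=0$ with nonzero leading coefficient (indeed $b^{1+q^2}\neq1$ is itself forced by $A_1=0$ and $b\notin\{1,c\}$ via Propositions \ref{prop2}\ and \ref{prop4}), leaving only $u=0$ and hence no off-diagonal solution; the total is exactly $1$. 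I expect the main obstacle to be the diagonal step: proving that \eqref{eq4-24} cannot have exactly one root in $\mu_{q+1}$ when $A_1\neq0$, for which one must recognize the identity $A_0/A_2=(A_1/A_2)^{1-q}$ and invoke the first (rather than the second) alternative of Lemma \ref{lem0}, and then carefully matching the degenerate $A_1=0$ case with the vanishing of the off-diagonal part supplied by Proposition \ref{prop4}.
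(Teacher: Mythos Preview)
Your proposal is correct and follows essentially the same route as the paper: the diagonal/off-diagonal split from the proof of Theorem~\ref{thm2}, the $0$-or-$2$ dichotomy for \eqref{eq4-24} via Lemma~\ref{lem0}(1), Proposition~\ref{prop3} for the degenerate diagonal case, and Proposition~\ref{prop4} to kill the off-diagonal part in the converse. Your parity packaging (off-diagonal even, diagonal odd iff both trace conditions hold, total $\le 2$ from Theorem~\ref{thm2}) is a clean way to phrase what the paper does more tersely; note only that the step ``$b^{1+q^2}\neq 1$ is forced'' should cite Proposition~\ref{prop1}(2) rather than Proposition~\ref{prop2}, and that in the converse you do not actually need to show the off-diagonal count is zero---diagonal $=1$, off-diagonal even, and total $\le 2$ already force total $=1$.
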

\begin{proof}
Note that when $b=1,c$, \eqref{eq4-18} has exactly one solution. For $b\ne 1,c$, from the proof of Theorem \ref{thm2}, we firstly know that \eqref{eq4-18} has either $0$ or $2$ solutions if $\Tr_{n}^{4n}(c^{q+1}v^{2q})\ne 0$. To guarantee the uniqueness of the solution, it is necessary that $\Tr_{n}^{4n}(c^{q+1}v^{2q})= 0$. If $A_{1}=\Tr_{n}^{4n}(c(v+v^q))\ne 0$, we can see that \eqref{eq4-24} gives $0$ or $2$ $\alpha$'s, which implies that $\Tr_{n}^{4n}(c(v+v^q))=0$. It is easy to see that $b=1,c$ also satisfy $\Tr_{n}^{4n}(c^{q+1}v^{2q})=0$ and $\Tr_{n}^{4n}(c(v+v^q))=0$. The sufficiency is obvious.
\end{proof}
Now we can $c$-differential spectrum of this power function.
\begin{thm}
Let $\omega_{c,i}$ denote the number of $b$'s such that \eqref{eq4-18} has exactly $i$ solutions, for $i=0,1,2$. Then we have
$$\omega_{c,0}=\omega_{c,2}=\frac{q^4-q^2}{2},\,\,\omega_{c,1}=q^2.$$
\end{thm}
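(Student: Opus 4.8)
The plan is to pin down the three quantities $\omega_{c,0},\omega_{c,1},\omega_{c,2}$, noting first that by Theorem \ref{thm2} the function is APcN, so $\omega_{c,i}=0$ for $i\ge 3$ and only these three are unknown. I would obtain two global counting identities and then evaluate $\omega_{c,1}$ directly. Counting all $b$ gives $\omega_{c,0}+\omega_{c,1}+\omega_{c,2}=q^4$. For the second identity, write the left-hand side of \eqref{eq4-18} as $\Phi(x)$; since squaring is a bijection on $\mathbb{F}_{q^4}$, each $x\in\mathbb{F}_{q^4}$ is a solution of \eqref{eq4-18} for the unique $b$ with $b^2=\Phi(x)$. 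Hence the total number of incidences $\sum_{b}(\text{\#solutions for }b)$ equals $q^4$, i.e. $\omega_{c,1}+2\omega_{c,2}=q^4$. Subtracting the two relations yields $\omega_{c,0}=\omega_{c,2}$, so everything reduces to computing $\omega_{c,1}$.

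Next I would use Proposition \ref{prop7}, by which $\omega_{c,1}$ equals the number of $b\in\mathbb{F}_{q^4}$ satisfying $\Tr_{n}^{4n}(c^{q+1}v^{2q})=0$ and $\Tr_{n}^{4n}(c(v+v^q))=0$, where $v=b^q+b^{q^2}$. By the equivalence \eqref{neweq} (established inside the proof of Proposition \ref{prop6}), the first condition is the same as $\Tr_{n}^{4n}(c(v+v^q))+\Tr_{n}^{4n}(c^{q+1}v^q)=0$, so the pair of conditions becomes the two $\mathbb{F}_q$-linear (in $b$) conditions $L_1(b):=\Tr_{n}^{4n}(c(v+v^q))=0$ and $L_2(b):=\Tr_{n}^{4n}(c^{q+1}v^q)=0$. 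Both $L_1,L_2$ are $\mathbb{F}_q$-linear functionals $\mathbb{F}_{q^4}\to\mathbb{F}_q$ that vanish on $\ker(b\mapsto v)=\mathbb{F}_q$, hence factor through the $3$-dimensional image $W=\{v:\Tr_{n}^{4n}(v)=0\}$, on which $b\mapsto v$ is $q$-to-$1$. Thus $\omega_{c,1}=q\cdot\#\{v\in W:\bar L_1(v)=\bar L_2(v)=0\}$, and it suffices to prove that the induced functionals $\bar L_1,\bar L_2$ are $\mathbb{F}_q$-independent on $W$: this forces their common kernel to be $1$-dimensional of size $q$, giving $\omega_{c,1}=q\cdot q=q^2$.

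The main obstacle is exactly this independence, which I would settle by restricting to the subspace $\mathbb{F}_{q^2}\subseteq W$ (indeed $\Tr_{n}^{4n}(v)=0$ for all $v\in\mathbb{F}_{q^2}$). For $v\in\mathbb{F}_{q^2}$ one computes, using $c^{q^2}=c^{-1}$, that $\bar L_1(v)=(v+v^q)\Tr_{n}^{4n}(c)$ and $\bar L_2(v)=v^q(c^{q+1}+c^{-q-1})+v(c^{q-1}+c^{1-q})$. On the line $\mathbb{F}_q$ the functional $\bar L_1$ vanishes identically, while $\bar L_2(v)=v\,\Tr_{n}^{4n}(c^{q+1})$, which is nonzero for $v\ne 0$ since $\Tr_{n}^{4n}(c^{q+1})=(c+c^{-1})(c^q+c^{-q})\ne 0$ for $c\in\mu_{q^2+1}\setminus\{1\}$; moreover $\bar L_1\not\equiv 0$ because $\Tr_{n}^{4n}(c)\ne 0$. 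Hence neither functional is a scalar multiple of the other, so $\bar L_1,\bar L_2$ are independent already on $\mathbb{F}_{q^2}$, and a fortiori on $W$. (As an independent check, the polar decomposition $v=\rho y$ with $\rho\in\mu_{q^2+1}$, $y\in\mathbb{F}_{q^2}^{*}$ used in the proof of Proposition \ref{prop4} forces $c^q\rho^{q-1}=1$, which determines $\rho$ uniquely because $\gcd(q-1,q^2+1)=1$, and then \eqref{eq4-8} leaves $q-1$ admissible $y$'s, yielding $q-1$ nonzero $v$'s together with $v=0$.) With $\omega_{c,1}=q^2$ established, the two counting identities immediately give $\omega_{c,2}=\omega_{c,0}=\tfrac{q^4-q^2}{2}$, completing the proof.
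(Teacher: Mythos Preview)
Your proof is correct and follows essentially the same route as the paper: the two counting identities, the reduction via Proposition~\ref{prop7} and the equivalence \eqref{neweq} to the pair of $\mathbb{F}_q$-linear conditions $\Tr_{n}^{4n}(c(v+v^q))=0$ and $\Tr_{n}^{4n}(c^{q+1}v^q)=0$, and then a linear-independence argument to get $\omega_{c,1}=q^2$. The only difference is in how that last independence check is carried out: the paper rewrites both conditions directly as $\Tr_{n}^{4n}((c+c^{-1})b^{q^3})=0$ and $\Tr_{n}^{4n}(c^{q}(c+c^{-1})b^{q^3})=0$ and observes that $c+c^{-1}$ and $c^{q}(c+c^{-1})$ are $\mathbb{F}_q$-independent (since $c^q\notin\mathbb{F}_q$), giving a common kernel of size $q^2$ in one step, whereas you factor through the $q$-to-$1$ map $b\mapsto v$ onto $W$ and verify independence by restricting to $\mathbb{F}_{q^2}$. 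Both are valid; the paper's rewriting is a touch more direct, while your factoring-through-$v$ argument makes the linear-algebra structure more explicit.
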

\begin{proof}
From the relations
\begin{eqnarray*}
  \omega_{c,0}+\omega_{c,1}+\omega_{c,2} &=& q^4,  \\
  \omega_{c,1}+2\omega_{c,2} &=& q^4,
\end{eqnarray*}
we have $\omega_{c,0}=\omega_{c2}$. Then it suffices to determine $\omega_{c,1}$. Since $\Tr_{n}^{4n}(c^{q+1}v^{2q})=0$ is equivalent to
$\Tr_{n}^{4n}(c(v+v^q))+\Tr_{n}^{4n}(c^{q+1}v^q)=0$, and by Proposition \ref{prop7}, we have
\begin{eqnarray*}
   &&\Tr_{n}^{4n}(c(v+v^q))=\Tr_{n}^{4n}(c(b^q+b^{q^3}))=\Tr_{n}^{4n}((c+c^{-1})b^{q^3})=0,  \\
   &&\Tr_{n}^{4n}(c^{q+1}v^q)=\Tr_{n}^{4n}(c^{q+1}(b^{q^2}+b^{q^3}))=\Tr_{n}^{4n}(c^q(c+c^{-1})b^{q^3}).
\end{eqnarray*}
Observe that $c+c^{-1}$ and $c^{q}(c+c^{-1})$ are $\mathbb{F}_{q}$ independent, we know that the number of $b$'s satisfying the above two equations is $q^2$, i.e $\omega_{c,1}=q^2$.
\end{proof}

\section{Concluding remarks}
In this paper, it was proved that the power function $F(x)=x^d$ over $\mathbb{F}_{q^4}$, with $q=2^n$ and $d=q^3+q^2+q-1$, is APcN at each point $c\in \mu_{q^2+1}\setminus\{1\}$, and the corresponding $c$-differential spectrum was also determined.


\section*{Appendix A:  Proof of Proposition \ref{prop5}}
\begin{proof}
The proof is obtained by directly comparing the coefficients on both sides. Firstly we have
\begin{eqnarray*}
  \Tr_{n}^{4n}(c^{1+q})\cdot B&=&\Tr_{n}^{4n}\left(c^{1+q}(b^{2q^2}+b^{2q^3})(c^{1+q}+c^{-1+q}+c^{-q-1}+c^{1-q})\right)  \\
   &=& \Tr_{n}^{4n}\left(c^{2(1+q)}+c^{2q}+1+c^2)(b^{2q^2}+b^{2q^3})\right)\\
   &=& \Tr_{n}^{4n}\left(c^{2(1+q)}(b^{2q^2}+b^{2q^3})\right)+\Tr_{n}^{4n}\left(c^2(b^{2q}+b^{2q^3})\right) \\
   &=& G_{0}.
\end{eqnarray*}
For the coefficient of the term $u$, the computations of
\begin{eqnarray*}
   \Tr_{n}^{4n}(c)\cdot B &=& \Tr_{n}^{4n}\left(c^{1+q}(b^{2q^2}+b^{2q^3})(c+c^q+c^{-1}+c^{-q})\right) \\
   &=& \Tr_{n}^{4n}\left((c^{q+2}+c^{2q+1}+c^q+c)(b^{2q^2}+b^{2q^3})\right)\\
   &=& \Tr_{n}^{4n}\left((c^{q+2}+c^{2q+1})(b^{2q^2}+b^{2q^3})\right)+\Tr_{n}^{4n}\left(c(b^{2q}+b^{2q^3})\right)
\end{eqnarray*}
and
\begin{eqnarray*}
  \Tr_{n}^{4n}(c^{1+q})\cdot A &=& \Tr_{n}^{4n}\left(c(b^q+b^{q^2})(b^{q^2}+b^{q^3})(c^{1+q}+c^{-1+q}+c^{-q-1}+c^{1-q})\right) \\
   &=&\Tr_{n}^{4n}\left((c^{q+2}+c^q+c^{-q}+c^{2-q})(b^{q+q^2}+b^{2q^2}+b^{q+q^3}+b^{q^2+q^3})\right)\\
   &=& \Tr_{n}^{4n}\left((c^{q+2}+c^{2-q})(b^{q+q^2}+b^{2q^2}+b^{q+q^3}+b^{q^2+q^3})\right)\\
   &&+\Tr_{n}^{4n}\left(c(b^{q^2+q^3}+b^{2q^3}+b^{q^3+1}+b^{1+q}+b^{2q}+b^{q+q^2})\right)\\
   &=&\Tr_{n}^{4n}\left(c^{q+2}(b^{q+q^2}+b^{2q^2}+b^{q+q^3}+b^{q^2+q^3})\right)\\
   &&+\Tr_{n}^{4n}\left(c^{2q+1}(b^{q^2+q^3}+b^{2q^3}+b^{q^2+1}+b^{q^3+1})\right)\\
   &&+\Tr_{n}^{4n}\left(c(b^{q^2+q^3}+b^{2q^3}+b^{q^3+1}+b^{1+q}+b^{2q}+b^{q+q^2})\right),
\end{eqnarray*}
give that
\begin{eqnarray*}
   && \Tr_{n}^{4n}(c)\cdot B+\Tr_{n}^{4n}(c^{1+q})\cdot A  \\
   &=& \Tr_{n}^{4n}\left(c^{q+2}(b^{q+q^2}+b^{2q^3}+b^{q+q^3}+b^{q^2+q^3})\right)
   +\Tr_{n}^{4n}\left(c^{2q+1}(b^{q^2+q^3}+b^{2q^2}+b^{q^2+1}+b^{q^3+1})\right)\\
   &&+\Tr_{n}^{4n}\left(c(b^{q^2+q^3}+b^{q^3+1}+b^{1+q}+b^{q+q^2})\right) \\
   &=& \Tr_{n}^{4n}\left(c^{q+2}(b^q+b^{q^3})(b^{q^2}+b^{q^3})\right)
   +\Tr_{n}^{4n}\left(c^{2q+1}(b+b^{q^2})(b^{q^2}+b^{q^3})\right)
   +\Tr_{n}^{4n}\left(c(b+b^{q^2})(b^q+b^{q^3})\right) \\
   &=&\Tr_{n}^{4n}\left(c^{q+1}(b^{q^2}+b^{q^3})(c(b^q+b^{q^3})+c^q(b+b^{q^2}))\right)
      +\Tr_{n}^{4n}\left(c(b+b^{q^2})(b^q+b^{q^3})\right) \\
   &=&\Tr_{n}^{4n}\left(c^{q+1}(b^{q^2}+b^{q^3})\left(\Tr_{n}^{4n}(c(b^q+b^{q^3}))+c^{-1}(b^q+b^{q^3})+c^{-q}(b^{q^2}+b)\right)\right)\\
      &&+\Tr_{n}^{4n}\left(c(b+b^{q^2})(b^q+b^{q^3})\right) \\
   &=&\Tr_{n}^{4n}\left(c^{q+1}(b^{q^2}+b^{q^3})\right)\Tr_{n}^{4n}\left(c(b^q+b^{q^3})\right)+\Tr_{n}^{4n}\left((b^{q^2}+b^{q^3})\left(c^{q}(b^q+b^{q^3})+c(b^{q^2}+b)\right)\right)\\
      &&+\Tr_{n}^{4n}\left(c(b+b^{q^2})(b^q+b^{q^3})\right)  \\
    &=&\Tr_{n}^{4n}\left(c^{q+1}(b^{q^2}+b^{q^3})\right)\Tr_{n}^{4n}\left(c(b^q+b^{q^3})\right) \\
    &=&G_{1}.
\end{eqnarray*}
The coefficient of the term $u^2$
\begin{eqnarray*}
   && \Tr_{n}^{4n}(c)\cdot A +B+\Tr_{n}^{4n}(c^{1+q})(1+b^{q+q^3})(1+b^{1+q^2})  \\
   &=& \Tr_{n}^{4n}\left((c+c^q+c^{-1}+c^{-q})c(b^q+b^{q^2})(b^{q^2}+b^{q^3})\right)+\Tr_{n}^{4n}\left(c^{1+q}(b^{2q^2}+b^{2q^3})\right)\\
   &&+\Tr_{n}^{4n}\left(c^{1+q}(1+b^{q+q^3}+b^{1+q^2}+b^{1+q+q^2+q^3})\right)\\
   &=&\Tr_{n}^{4n}\left((c^2+c^{q+1}+1+c^{-q+1})(b^q+b^{q^2})(b^{q^2}+b^{q^3})\right)+\Tr_{n}^{4n}\left(c^{1+q}(b^{2q^2}+b^{2q^3})\right)\\
   &&+\Tr_{n}^{4n}\left(c^{1+q}(1+b^{q+q^3}+b^{1+q^2}+b^{1+q+q^2+q^3})\right)\\
      &=&\Tr_{n}^{4n}\left((c^2+1)(b^q+b^{q^2})(b^{q^2}+b^{q^3})\right)+\Tr_{n}^{4n}\left(c^{1+q}(b^{q^2}+b^{q^3})(b+b^q+b^{q^2}+b^{q^3})\right)\\
   &&+\Tr_{n}^{4n}\left(c^{1+q}(b^{2q^2}+b^{2q^3}+1+b^{q+q^3}+b^{1+q^2}+b^{1+q+q^2+q^3})\right)\\
   &=&\Tr_{n}^{4n}\left(c^2(b^q+b^{q^2})(b^{q^2}+b^{q^3})\right)+\Tr_{n}^{4n}(b^2)+
   \Tr_{n}^{4n}\left(c^{1+q}(1+b^{1+q^3}+b^{q+q^2}+b^{1+q+q^2+q^3})\right)\\
   &=&G_{2}.
\end{eqnarray*}
The coefficient of $u^3$ is
\begin{eqnarray*}
   &&\Tr_{n}^{4n}(c)(1+b^{q+q^3})(1+b^{1+q^2})+A  \\
   &=&\Tr_{n}^{4n}(c(1+b^{q+q^3}+b^{1+q^2}+b^{1+q+q^2+q^3}))+\Tr_{n}^{4n}(c(b^{q+q^2}+b^{2q^2}+b^{q+q^3}+b^{q^2+q^3})) \\
   &=&\Tr_{n}^{4n}(c(1+b^{1+q^2}+b^{1+q+q^2+q^3}+b^{q+q^2}+b^{2q^2}+b^{q^2+q^3})) \\
   &=&G_{3}
\end{eqnarray*}
and the coefficient of $u^4$ is obvious.
\end{proof}

\end{document}